\documentclass[11pt]{article}
\usepackage[T1]{fontenc}
\usepackage{lmodern}
\usepackage[utf8]{inputenc}
\usepackage{mdframed}
 
\usepackage[paper=letterpaper, margin=1in]{geometry}

% For handling authors and affiliations
\usepackage{authblk}
 
%\usepackage{microtype}

%lipics

\usepackage{amsmath}
\usepackage{amssymb}
\usepackage{amsthm}
\usepackage{graphicx}
\usepackage{xspace}
\usepackage{caption}
\usepackage{subcaption}
\usepackage{thm-restate}

\makeatletter
\newtheorem{repeattheorem@}{Theorem}
\newenvironment{repeattheorem}[1]{%
    \def\therepeattheorem@{\ref{#1}}
    \repeattheorem@
}
{\endrepeattheorem@}
\makeatother

% \graphicspath{{../figures/}}

\usepackage{xparse}
\usepackage{todonotes}

\newtheorem*{theorem*}{Theorem}
\newtheorem{theorem}{Theorem}
\newtheorem{lemma}[theorem]{Lemma}

\theoremstyle{definition}
\newtheorem{definition}[theorem]{Definition}
\theoremstyle{plain}

\newcommand{\distance}[2]{\ensuremath{\mathrm{Dist}(#1, #2)}}

\usepackage{xparse} 

\makeatletter

\newcommand{\anext}[1]{a_{#1}^{\textrm{next}}}

\makeatletter
\def\whp{w.h.p\@ifnextchar.{}{\@ifnextchar,{.}{.\ }}}
\makeatother

\bibliographystyle{alpha}

\date{}

\title{A Tight Bound for Asynchronous Collaborative Grid Exploration}

\author[1]{Sebastian Brandt}
\author[1, 2]{Jara Uitto}
\author[1]{Roger Wattenhofer}
\affil[1]{ETH Zurich, Switzerland\\
  \texttt{brandts@ethz.ch}\\
  \texttt{jara.uitto@inf.ethz.ch}\\
  \texttt{wattenhofer@ethz.ch}}
\affil[2]{University of Freiburg, Germany}
%\authorrunning{S. Brandt and J. Uitto and R. Wattenhofer}

%\Copyright{Sebastian Brandt and Jara Uitto and Roger Wattenhofer}

%\subjclass{F.2.2 Nonnumerical Algorithms and Problems}
%\keywords{Finite automata, Graph exploration, Mobile robots}

\begin{document}

\maketitle
\begin{abstract}
	Recently, there has been a growing interest in grid exploration by agents with limited capabilities.
	We show that the grid cannot be explored by three semi-synchronous finite automata, answering an open question by Emek et al.~\cite{Emek15} in the negative.
	
In the setting we consider, time is divided into discrete steps, where in each step, an adversarially selected subset of the agents executes one look-compute-move cycle.
	The agents operate according to a shared finite automaton, where every agent is allowed to have a distinct initial state.
	The only means of communication is to sense the states of the agents sharing the same grid cell.
	The agents are equipped with a global compass and whenever an agent moves, the destination cell of the movement is chosen by the agent's automaton from the set of neighboring grid cells.
	In contrast to the four agent protocol by Emek et al., we show that three agents do not suffice for grid exploration.
\end{abstract}

\newpage

\section{Introduction}
Consider the problem of exploring an infinite grid with a set of mobile robots, ants, or \emph{agents}.
In practical applications, it is often desirable to make use of inexpensive and simple devices and therefore, a finite automaton is an attractive choice for modeling these agents.
Furthermore, neither reliable communication nor synchronous time is always available and thus, distributed and non-synchronous solutions are needed.
Also exploration models inspired by biology require these features; for example models for ant foraging assume limited capabilities and distributed searching.
In both settings mentioned above, it is often reasonable to assume simple means of communication of nearby agents.

\paragraph*{Semi-Synchrony.}
Recently, there has been a growing interest in studying constant memory agents performing exploration on an infinite grid.
An infinite grid is a natural discrete version of a plane which disallows the bounded memory agents to make any use of the boundaries of the grid.
Emek et al.~\cite{Emek2014} introduced a model where the agents are able to communicate by sensing each other's states and showed a tight upper bound for the time needed for $k$ agents to find a treasure\footnote{In the deterministic case, exploring the grid and finding a treasure are equivalent. 
In the randomized case, considering a treasure is more convenient as the exploration is equivalent to hitting every cell in expected finite time.} at distance $D$.
As the first step into the model, let us introduce the way that the semi-synchrony is defined.
The time is divided into discrete time steps, and in each time step, an adversarially chosen subset of the agents performs a look-compute-move cycle in parallel.
In each cycle, the chosen agents first sense the states of all the other agents in the same cell and then, determined by their transition function, either stay still or move to an adjacent grid cell.
We point out that in every step, every agent performs the ``look'' action before any agent executes their ``compute'' step, i.e., agents sharing a cell and activated in the same time step see each other's states before any of them executes a state transition.
This definition allows an arbitrary discrepancy in the number of steps the agents perform but ensures that, whenever two agents meet, at least one of them will be able to sense the presence of the other agent.

All input parameters, such as $D$ and $k$ are unknown to the agents and they are all initially located in the origin of the grid.
Motivated by the fact that ants are able to perform very precise path integration, it is assumed that the agents are endowed with a global compass.

\paragraph*{Previous Results.}
Following up on the above model, Emek et al.~\cite{Emek15} studied the minimum number of agents needed to explore the infinite grid, where exploring refers to reaching any fixed cell within (expected) finite time.
They showed that three randomized and four deterministic semi-synchronous agents are enough for the exploration task. We want to point out that the \emph{asynchronous} environment in their paper is referred to as semi-synchronous in older literature~\cite{Sugihara1996, Suzuki1999}.
The paper left two open questions:
\begin{center}
	\begin{minipage}{0.9\textwidth}
% 	\vspace{-6pt}
	%\begin{mdframed}[hidealllines=true, backgroundcolor=gray!00]
		\emph{Can two agents controlled by a randomized FA solve the synchronous or asynchronous version of the ANTS\,\footnotemark\ problem?} \\ 
		\emph{Is there an effective FA-protocol for async-ANTS for three agents when no random bits are available?}
	\vspace{-8pt}
	%\end{mdframed}
	\end{minipage}
\end{center}
\footnotetext{The ANTS problem in their context is the same as our grid exploration problem.}\bigskip
Very recently, Cohen et al. solved the first question by showing that two randomized agents do not suffice~\cite{Cohen2017}.
The main result of this paper is a negative answer to the second question:
\begin{theorem}
	Three semi-synchronous agents controlled by a finite automaton are not sufficient to explore the infinite grid.
	\label{thm:threelower}
\end{theorem}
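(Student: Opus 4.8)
The plan is a proof by contradiction: assume some protocol $P$, run by three agents $a_1,a_2,a_3$ each executing a finite automaton with state set $Q$, explores the grid under \emph{every} semi-synchronous schedule, and then construct a schedule under which $P$ fails. The overall strategy is to show that the adversary can confine all ``interesting'' behaviour to a bounded region, after which the agents behave as isolated automata and cover only a sparse set.

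First I would analyse an isolated automaton. If an agent is never co-located with another agent, its sequence of states depends only on itself (the grid is homogeneous and it senses nothing), hence is eventually periodic; over one period it performs a fixed cyclic move sequence with net displacement $v\in\mathbb{Z}^2$. So an isolated agent either \emph{loops} ($v=\mathbf{0}$: trajectory stays in a disk of radius $O(|Q|)$) or \emph{escapes} ($v\neq\mathbf{0}$: trajectory stays in a strip of width $O(|Q|)$ around a ray, receding to infinity). The same dichotomy holds for a pair or triple of agents confined to a bounded-diameter neighbourhood of one another: their joint configuration space is finite, so the joint behaviour under any schedule is eventually periodic and again looping or escaping (or the group splits into smaller groups that stop influencing each other). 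A small bookkeeping point: an agent whose looping trajectory revisits a cell still holding a frozen companion keeps sensing that fixed state, which only changes the effective, still finite, transition function, so the dichotomy persists. The second, easy step then records that ``eventually isolated'' executions do not explore: if from some time on the agents split into groups, each confined to a bounded region and no two groups ever meet again, then each group either loops forever or escapes along a fixed ray, so the visited set lies in one bounded region together with at most three strips around rays --- not all of $\mathbb{Z}^2$. Consequently, in any correct execution there must be merges of groups (meetings) at unbounded distance from the origin, since otherwise all nontrivial interaction is confined to a bounded region and outside it we are in the isolated case.

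The heart of the argument --- and the step I expect to be the main obstacle --- is to show the adversary can always prevent such far-away meetings. Morally, a meeting ``out there'' requires two escaping groups whose rays cross to collide near the crossing, and since the adversary controls relative timing it can typically push one group past that crossing region long before the other arrives so they pass without interacting; the residual danger is two groups escaping along essentially the same ray, one catching up to the other, and --- more insidiously --- a protocol that uses one agent as a movable landmark while a second agent shuttles between it and a third, realizing an unbounded unary counter and hence arbitrarily long exploration strokes (this is in essence how four agents succeed). The adversary must argue that with only three agents no such mechanism survives delays: with no fourth agent to serve as a consistency anchor, a single well-placed delay corrupts the formation, and the agents, being finite automata, cannot detect the corruption, so the ``counter'' cannot be driven to infinity while keeping the exploration frontier advancing. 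Making this rigorous seems to require a careful case analysis over the finitely many types of small looping/escaping groups and the finitely many ways two of them can interact, together with an infinite-pigeonhole (or compactness) argument producing a schedule that keeps a suitable potential --- e.g.\ the distance of the farthest meeting, or $\max_i\|\mathrm{pos}(a_i)\|$ restricted to interaction events --- bounded forever.

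Granting the crux, the conclusion is immediate: under the constructed schedule all interaction lies in a bounded region $B$, outside $B$ every agent is an isolated looper or escaper, so the visited set is contained in $B$ together with at most three strips around rays and therefore omits, for instance, any far cell lying off all three rays. Hence $P$ fails to explore the grid, contradicting the assumption, and Theorem~\ref{thm:threelower} follows. I would expect the paper's named gadgets (the various ``Guide'' constructions, \swap, \around, \acyc{\cdot}) to be precisely the machinery that turns the hand-wavy ``prevent a robust unbounded landmark counter'' intuition of the third paragraph into a formal adversarial schedule.
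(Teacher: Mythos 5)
The outer shell of your argument --- the contradiction setup, the loop/escape dichotomy for an isolated automaton or for a group confined to a bounded-diameter neighbourhood, and the observation that an eventually-isolated execution covers only a bounded region plus $O(1)$ strips --- matches the paper's setup. But the step you yourself flag as the main obstacle is precisely the step that is missing, and the direction you propose for it is not the one that works. You want the adversary to \emph{prevent} far-away meetings. The paper does the opposite: it proves (Lemma \ref{lemma: distance} and Lemma \ref{lemma: meeting}) that under its fixed schedule the largest pairwise distance tends to infinity \emph{and} every agent keeps meeting the others infinitely often, so there are infinitely many long ``travels'' between far-apart agents. No contradiction is extracted from preventing these meetings, because they cannot be prevented --- if the adversary could force the groups to stop interacting, your own second step would already yield non-exploration and the whole difficulty would evaporate. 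In particular, the claim that ``a single well-placed delay corrupts the formation'' is exactly the assertion that needs the bulk of the work: the shuttling agent can a priori carry unboundedly much information in its \emph{position} relative to the landmark agent, and a finite case analysis over ``types of small groups'' does not obviously bound that.

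The paper's actual resolution of the crux is an idea your sketch does not contain in any form. First, after some time all travel vectors share one fixed slope $r$ (Lemma \ref{lemma: vector}, proved by showing that two consecutive travels of different slopes would make the returning agent miss its target once the separation $D$ is large). Second --- the key technical lemma (Lemma \ref{lemma: tuple}) --- the state in which a traveling agent arrives depends on the start-to-end displacement only \emph{modulo} a fixed vector $(x,y)$ built from the finitely many possible travel vectors of slope $r$. This collapses the a priori unbounded relative-position information into a finite tuple $Q_t$ of states, relative positions mod $(x,y)$, meeting set and next-scheduled agent, which is shown to evolve deterministically from one travel meeting pair to the next (Lemmas \ref{lemma: traveltravel} and \ref{lemma: modulo}); pigeonhole on the finitely many tuples then makes the whole three-agent configuration eventually periodic up to a fixed translation, confining all explored cells to within constant distance of a set of the form $\{c + j'(w,z) + j''(x,y) : j' \ge 0\}$, hence to one side of a line. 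Without something playing the role of this modulo/quotienting argument, your proposed ``infinite-pigeonhole on a bounded potential'' has no finite object to pigeonhole on, so the proposal as written has a genuine gap at its core. (Also, the ``Guide''/``Swap''/``Around'' macros you allude to are unused leftovers in the preamble; there are no such gadgets in this proof.)
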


Our result is obtained by solving two technical challenges.
First, we carefully design an adversarial schedule for the agents that, under the assumption that the agents actually explore the entire grid, forces them to obey a movement pattern with the following property: There is a fixed width $w$ and fixed slope $s$ auch that at any point in time, all agents are contained in a band of width $w$ and slope $s$.
Second, we formally show that the agents cannot encode a super-constant amount of information in their relative positions. 
In other words, while the relative distance can be unbounded and represent an unbounded amount of information, we can bound the amount of information the agents can infer from their relative positions.
Due to space constraints, most of our proofs are deferred to the full version of the paper~\cite{BrandtUW17}.

\section{Related Work}
Graph exploration is a widely studied problem in the computer science literature.
In the typical setting one or more agents are placed on some node of a graph and the goal is to visit every node and/or edge of the graph by moving along the edges.
There is a wide selection of variants of graph exploration and one of the standard ways to classify these variants is to divide them into \emph{directed} and \emph{undirected} variants~\cite{Deng1999, Albers2000}.
In the directed model, the edges of the graph only allow traversing into one direction, whereas in the undirected model, traversing both ways is allowed.
Our work assumes the undirected graph exploration model.

Other typical parameters of the problem are the conditions of a successful exploration and symmetry breaking mechanisms.
Some related works demand that the agents are required to halt after a successful exploration~\cite{Diks2004} or that the agents must return to their starting point after the exploration~\cite{Averbakh1996}.
From the perspective of symmetry breaking, one characterization is to break the problem into the case of equipping nodes with unique identifiers~\cite{Panaite1998, Duncan2006} and into the case where nodes are anonymous~\cite{Budach1978, Rollik1979, Bender1994}.
Since the memory of our agents is restricted to a constant amount of bits with respect to the size of the graph, the unique identifiers are not helpful.

% We consider the \emph{port numbering} model, where the agents are locally able to distinguish between the edges leading to the neighboring nodes.
% This can be thought of as labeling the incident edges of node $u$ with labels $1, \ldots, d(u)$, where $d(u)$ denotes the degree of $u$.
% The port numbering model can be further divided into cases, where on one extreme, the port numbering is \emph{globally consistent}, e.g., in the case of a grid all nodes and for every agent, the ports $1, 2, 3$ and $4$ lead to north, west, east, and south, respectively.
% This can be thought of a compass that always points out the cardinal directions.
% On the other extreme, the port numbering is restricted to a \emph{Local Orientation}~\cite{Dobrev2005}, i.e., there is no consistency between the edge labels.
% In our model, an agent is aware through which port it arrived to its current node.
% Our main result assumes the globally consistent model and we use the same ideas to show a stronger lower bound in a port numbering model without such a global labeling.

The agents typically operate in \emph{look-compute-move} cycles, where they first gather the local information, then perform local computations, and finally, decide to which node they move.
%In our case, this corresponds to sensing the states of the agents in the same cell and applying the transition function.
This execution model can be divided into \emph{synchronous}~\cite{Suzuki1999}, \emph{semi-synchronous}~\cite{Sugihara1996, Suzuki1999} and \emph{asynchronous} variants~\cite{Suzuki96, Flocchini2000}, referred to as $\mathcal{FSYNC}$, $\mathcal{SSYNC}$, and $\mathcal{ASYNC}$.
In the $\mathcal{FSYNC}$ model, all agents execute their cycles simultaneously in discrete rounds.
In the $\mathcal{SSYNC}$ model only a subset (not necessarily proper) of the agents is \emph{activated} in every round and in the $\mathcal{ASYNC}$ model, the cycles are not assumed to be atomic.
To avoid confusion, we refer to the non-synchronous rounds as time steps.
In this paper, we consider the semi-synchronous model.
Note that since the $\mathcal{ASYNC}$ model is weaker than the $\mathcal{SSYNC}$ model, we directly obtain our lower bound result for the $\mathcal{ASYNC}$ model as well.

The standard efficiency measure of a graph exploration algorithm executed in the $\mathcal{FSYNC}$ model is the number of synchronous rounds it takes until the graph is explored~\cite{Panaite1998}.
In the non-synchronous models, this measure is typically generalized to the maximum delay between activation times of any agent~\cite{Chrobak2015}.
A widely-studied classic is the \emph{cow-path} problem, where the goal of the cow is to find food or a treasure on a line as fast as possible.
There is an algorithm with a constant competitive ratio for the case of a line and in the case of a grid, a simple spiral search is optimal and the problem has been generalized to the case of many cows~\cite{Baeza-Yates1993, Lopez-Ortiz2001}.
Some more recent work studied the time complexity of $n$ distributed agents searching for a treasure in distance $D$ on a grid and a $\Theta(D/n^2 + D)$ bound was shown in the case of Turing machines without communication and in the case of communicating finite automata~\cite{Feinerman2012, Emek2014}.

Our work does not focus on the time complexity of the problem, but rather on the computability, i.e, what is the minimum number of agents that are required to find the treasure.
The canonical algorithm in the case of little memory is the random walk, where the classic result states that a random walk explores an $n$-node graph in polynomial time~\cite{Aleliunas1979}.
In the case of infinite grids, it was shown in a recent paper that, even with a globally consistent orientation, two randomized agents cannot locate the treasure in finite expected time~\cite{Cohen2017}.
By combining this result with previous work~\cite{Blum1977, Emek15}, it follows that this lower bound is tight.
In the deterministic case, our lower bound of three deterministic semi-synchronous agents closes the remaining gap in the results of~\cite{Emek15}.

Another typical measure for efficiency is the number of bits of memory needed per agent~\cite{Fraigniaud2004, Diks2004}.
For example, it was shown by Fraigniaud et\ al., that $\Theta(D \log \Delta)$ bits are needed for a single agent to locate the treasure, where $D$ and $\Delta$ denote the diameter and the maximum degree of the graph, respectively.
The memory of our agents is bounded by a universal constant, independent of any graph parameters.

Work that falls close to our work is the study of graph exploration in \emph{labyrinths}, i.e., graphs that can be seen as 2-dimensional grids, where some subset of the nodes cannot be entered by the agents.
The classic results state that all co-finite (finite amount of cells not blocked) labyrinths can be explored by two finite automata and an automaton with two pebbles~\cite{Blum1978}, and that finite labyrinths (finite amount of cells are blocked) can be explored using one agent with four pebbles~\cite{Blum1977}, where a pebble is a movable marker.
Furthermore, it is known since long that there are finite and co-finite labyrinths where one pebble is not enough~\cite{Hoffmann1981} and that no finite set of finite automata can explore all planar graphs~\cite{Rollik1979}.
More recently, it was shown that $\Theta(\log \log n)$ pebbles for an agent with $\Theta(\log \log n)$ memory is the right answer for general graphs~\cite{Disser2016}.
Notice that since we do not assume synchronous communication between agents and a pebble can always be simulated by a finite automaton, our result also yields the same bound for the pebble model.

\section{Preliminaries}

\subsection*{The Model} 
The model we use is the same as in~\cite{Emek15}.
We consider a group of $n$ agents whose task is to explore every cell of the infinite $2$-dimensional grid where a cell is considered as explored when it has been visited by at least one of the agents.
We identify each cell of the grid with a pair of integers, i.e., the grid can be considered as $\mathbb Z^2$, with two cells being \emph{neighbors} if and only if they differ in one coordinate by exactly $0$ and in the other coordinate by exactly $1$.

In the beginning, all agents are placed in the same cell, called the \emph{origin}.
W.l.o.g., we will assume that the origin has the coordinates $(0, 0)$.
For the agents, all cells, including the origin, are indistinguishable; in particular, they do not have access to the coordinates of the cells.

Each agent is endowed with a \emph{compass}, i.e., each agent is able to distinguish between the four (globally consistent) cardinal directions in any cell and all agents have the same notion of those directions.
The behavior of each agent is governed by a deterministic finite automaton.
While we allow the agents to use different finite automata, we will assume that the agents use the same finite automaton but have different initial states.
Since in all cases we consider, $n$ is a constant, the two formulations are equivalent.

The only way in which communication takes place is the following:
Each agent senses for any state $q$ of the finite automaton whether there is at least one other agent in the same cell in state $q$.
In each step of the execution, an agent moves to an adjacent cell or stays in the current cell, solely based on its current state in the finite automaton and the subset of states $q$ for which another agent in state $q$ is present in the current cell. 

Given the above, we are set to describe our finite automaton more formally.
Let $Q$ denote the set of states, with each agent having its own initial state in $Q$.
The set of input symbols is $2^Q$, the set of all subsets of $Q$, reflecting the fact that for each state from $Q$ an agent in this state might be present or not in the considered cell.
The transition function $\delta: Q \times 2^Q \rightarrow Q \times \{ 0, 1, 2, 3, 4 \}$ provides an agent in state $q \in Q$ (sensing a subset $Q' \subseteq Q$ of states present in the same cell) with a new state $q' \in Q$ and a movement, where $1, 2, 3, 4$ stand for the four cardinal directions while $0$ indicates that the agent stays in the current cell. 

The $\mathcal{SSYNC}$~\cite{Sugihara1996, Suzuki1999} environment in which the agents perform their exploration is \emph{semi-synchronous}.
More specifically, we assume that the order of the steps of the agents is determined by an \emph{adversarial scheduler} that knows the finite automaton governing the agents' behavior.
Each step of an agent is a complete \emph{look-compute-move} cycle, where first an agent senses for which states agents are present in the current cell, then it applies the transition function with the sensed states and its own current state as input, and finally it moves as indicated by the result.
Cycles of different agents may occur at the same time, in which case each of the agents completes the sensing before any of the agents starts to move.
Cycles that do not occur at the same time have no overlap, i.e., the movement performed in an earlier cycle is completed before the sensing in a later cycle starts.
Hence, we may consider the order of the individual components of the execution as given by a mapping of the agents' steps to points in time.

We call such a mapping a \emph{schedule}.
Since the look-compute-move cycles of the agents are atomic in nature, we can assume w.l.o.g.\ that the static configurations of the agents on the grid (including the information about the states they are currently in) occur at integer points in time $t = 0, 1, \dots$, and that the steps of the agents determining the transition from one configuration to a new one take place between these points in time.
If an agent's action is scheduled between time $t$ and $t+1$, we say, for the sake of simplicity, that the action takes place at time $t$.
In order to prevent the adversary from delaying a single agent indefinitely, we adopt the common requirement that each agent is scheduled infinitely often.
For our lower bound we will only use adversarial schedules where no two agents are scheduled at the same time.

\subsection*{Definitions and Notation}
For the notion of distance between two cells we will use the Manhattan distance.
Let $c=(x,y)$, $c'=(x',y')$ be two cells of the infinite grid.
Then, the \emph{distance} between $c$ and $c'$ is defined as $\distance{c}{c'} = |x - x'| + |y - y'|$.
Moreover, we call the first coordinate of a cell the \emph{$x$-coordinate} and the second coordinate the \emph{$y$-coordinate}.
We denote the cell an agent $a$ occupies at time $t$ by $c_t(a) = \left (x_t(a), y_t(a) \right )$.
Similarly, we denote the state of the finite automaton in which agent $a$ is at time $t$ by $q_t(a)$.
If $a = a_i$ for some $1 \leq i \leq 3$, then we also write $c_t^i, x_t^i, y_t^i, q_t^i$ instead of $c_t(a_i), x_t(a_i), y_t(a_i), q_t(a_i)$, respectively.
Moreover, we denote the number of states of the finite automaton governing the behavior of the three agents by $N$.

In our lower bound proof, we show for each finite automaton that three agents governed by this automaton are not sufficient to explore the grid (or, more precisely, that there is an adversarial schedule for this automaton under which the agents do not explore every cell of the grid).
In this context, we consider the number $N$ as a constant, which also implies that the result of applying any fixed polynomial function to $N$ is a constant as well.
For the proof of our lower bound we require another intuitive definition.
Let $\ell$ be an infinite line in the Euclidian plane and $d$ some positive real number.
Let $B$ be the set of all points in the plane with integer coordinates and Euclidian distance at most $d$ to $\ell$.
Let $B'$ be the set of all grid cells that have the same coordinates as some point in $B$. 
Then we call $B'$ a \emph{band}.

\subsection*{A Single Agent}

Consider a single agent $a$ moving on the grid.
Since the number of states of its finite automaton is finite, $a$ must repeat a state at some point, i.e., there must be points in time $t, t'$ such that $q_t(a) = q_{t'}(a)$ and $q_{t''}(a) \neq q_t(a)$ for all $t < t'' < t'$.
As shown in \cite{Emek15}, agent $a$ will then, starting at time $t'$, repeat the exact behavior it showed starting at time $t$ regarding both movement on the grid and updating of its state.
We call the $2$-dimensional vector $c_{t'}(a) - c_t(a) = (x_{t'}(a) - x_{t}(a), y_{t'}(a) - y_{t}(a))$ the \emph{travel vector} of agent $a$ (from time $t$ to time $t'$).
Moreover, we call the time difference $t' - t$ the \emph{travel period}.

Note that travel vector and travel period do not depend on the choice of $t$ and $t'$ (provided $t$ and $t'$ satisfy the properties mentioned above).
In the case of multiple agents, we use the same definitions for any time segment where only a single agent is scheduled and does not encounter another agent.
In particular, we can only speak of a travel vector and a travel period when there are two points in time (in the considered time segment) where the scheduled agent repeats a state and at both times as well as in the time between, the agent is alone in its cell.

\section{Techniques} \label{sec:antstech}

In order to show our main result, we use a (large) proof by contradiction.
In the following we give a (very informal and possibly slightly inaccurate) high-level overview of how it proceeds. 
Our assumption, that holds throughout the remainder of the paper, is that three agents actually suffice to explore the grid.
From this assumption, we derive a contradiction as follows:

First, we fix an adversarial schedule for the three agents that has certain advantageous properties.
(We will show that it is already possible to derive a contradiction for this specific schedule.)
Then, using the finiteness of the number of configurations of agents in any bounded area, we show that for each distance $D$ there is a point in time such that from this time onwards, there are always at least two agents that have distance at least $D$.
However, since we can prove that any two agents must meet infinitely often, there must be infinitely many travels between the two far-away agents (which are not always the same agents).
We show that the vector along which such a travel takes place must have a fixed slope that is the same for all such travel vectors (from a sufficiently large point in time on).
Otherwise, there would exist two subsequent travels forth and back of different slope, which would imply that the traveling agent on its way back would miss the agent it is supposed to meet (which is the agent from whose position the first of the two travels started, roughly speaking).
This also holds if the traveling agent explores some area to the left and right of its travel direction (during its travel), since the distance $D$ between the two endpoints can be made arbitrarily large.

The crucial part of the proof is to show that the state of the traveling agent at the end of its travel does not depend on the exact vector between the start and the endpoint of its travel, but only on this vector ``modulo'' some other vector $v$ that is obtained by combining all of the finitely many possible traveling vectors of the aforementioned fixed slope.
Proving this statement enables us to show that, at the start of a travel, the information 1) about the states and relative locations ``modulo $v$'' of the agents, and 2) about which agent is scheduled next and which is the traveling agent, are sufficient to determine the same information at the start of the next travel.
Since there are only finitely many of these information tuples (exactly because they contain only the modulo version of the relative locations), at some point a tuple has to occur again.
Hence, in a sense, the whole configuration consisting of the three agents repeats its previous movement from this point on, at least if one ignores any movement in the direction of the fixed slope.
Thus, in each repetition between two occurrences of the information tuple, the whole configuration moves by some fixed (and always the same) vector, which implies that the agents explore ``at most half'' of the grid.

\section{The Schedule}

From this section on, we assume that three semi-synchronous agents whose behavior is governed by a finite automaton suffice to explore the grid.
Let $a_1$, $a_2$ and $a_3$ be these agents.
We start our proof by contradiction by specifying a schedule that we assume to be the adversarial schedule for the remainder of this paper:

We first schedule agent $a_1$ for some number of time steps, then agent $a_2$, then $a_3$, and then we iterate, again starting with $a_1$.
The number of steps an agent is scheduled can vary.
In other words, we can describe our schedule as a sequence 
\[
	\mathcal S = \left (\mathcal{S}_1^1, \mathcal{S}_1^2, \mathcal{S}_1^3, \mathcal{S}_2^1, \mathcal{S}_2^2, \mathcal{S}_2^3, \mathcal{S}_3^1, \dots \right )
\]
of subschedules where in each subschedule $\mathcal{S}_j^i$ only agent $a_i$ is scheduled.
The number of time steps in a subschedule $\mathcal{S}_j^i$ is determined as follows:

\begin{enumerate}

\item \label{item: meet} If there is a (finite) number $u > 0$ of time steps after which agent $a_i$ is in a cell occupied by another agent, then the subschedule $\mathcal{S}_j^i$ ends after $u_{\min}$ time steps where $u_{\min}$ denotes the smallest such $u$.

\item \label{item: repeat} If Case \ref{item: meet} does not apply, but there is a (finite) number $u > 0$ of time steps after which $a_i$ is in the same state in the same cell as it was at some earlier point in time during $\mathcal{S}_j^i$,
then do the following:

Fix a total order on the state space of $a_i$'s finite automaton.
(This total order can be chosen arbitrarily, but in each application of Case \ref{item: repeat} for agent $a_i$ the same order has to be used.)
Let $q$ be the smallest state according to this order which $a_i$ assumes at least twice in the same cell (if we scheduled $a_i$ indefinitely).
Then $\mathcal{S}_j^i$ ends after the smallest positive number of steps after which $a_i$ is in state $q$ and in a cell where $a_i$ would assume $q$ at least twice.
Note that the property that $a_i$ would assume $q$ twice implies that it would repeat the exact behavior between the first and the second assumption of $q$ infinitely often afterwards, thus iterating through the exact same movement on and on.

\item \label{item: escape} If none of the two above cases occurs, i.e., $a_i$ would move on indefinitely without meeting any other agent or being in the same state in the same cell as before, then we schedule as follows:
Let $(x,y)$ be the travel vector of $a_i$'s movement, and $k$ the travel period.
Then the subschedule $\mathcal{S}_j^i$ ends at the first time $t$ (strictly after the start of $\mathcal{S}_j^i$) for which the following property is satisfied:

For each cell $(x_t^r, y_t^r)$ occupied by an agent $a_r, r\neq i$, we have that 1) $x_t^i - x_t^r > k$ if $x > 0$, and $x_t^i - x_t^r < -k$ if $x < 0$, and 2) $y_t^i - y_t^r > k$ if $y > 0$, and $y_t^i - y_t^r < -k$ if $y < 0$.
The definition of the travel vector ensures that there is such a (finite) point in time $t$.
Note that Case \ref{item: escape} can only occur if $x \neq 0$ or $y \neq 0$.
Moreover, if this case actually occurs, then the complete subsequent schedule is adapted according to the following special rule (overriding all of the above): After time $t$, the two agents $a_r, r \neq i,$ are scheduled for one time step each (in arbitrary order), then agent $a_i$ is scheduled for $k$ time steps, i.e., exactly one travel period, and then we iterate this new scheduling.

\end{enumerate}

\begin{figure}

	\centering

	\begin{subfigure}{.45\textwidth}
		\centering
		\includegraphics{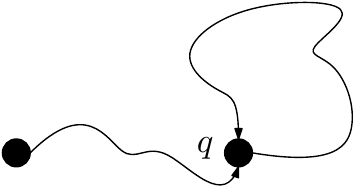}
		\caption{}\label{fig:schedule1}
	\end{subfigure}
	\begin{subfigure}{.45\textwidth}
		\centering
		\includegraphics{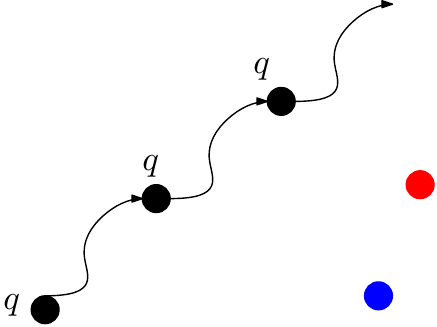}
		\caption{}\label{fig:schedule2}
	\end{subfigure}

	\caption{In Figure~\ref{fig:schedule1}, Case \ref{item: repeat} of our schedule is shown. Note that the agent already stops when it visits the cell on the right (in state $q$) for the first time (unless this happens after $0$ time steps). In Figure~\ref{fig:schedule2}, we see Case \ref{item: escape} of our schedule. One agent would move arbitrarily far away if scheduled sufficiently long. By letting this agent move away far enough and then scheduling it sufficiently often for a long enough period of time, we make sure that it will not interact anymore with any of the other two agents.}

	\label{fig:schedule}

\end{figure}

\noindent Observe that according to this schedule, the number of time steps a scheduled agent can stay put in a cell during one of its subschedules is upper bounded by $N$.
Also note that in each of the three cases, the number of steps in the subschedule is positive (and finite).
For an illustration of Cases \ref{item: repeat} and \ref{item: escape}, see Figure~\ref{fig:schedule}.
We now collect a few lemmas that highlight certain properties of the three cases.
%Note that Lemma \ref{lemma: type1} is not necessary for the further course of action; however, the presented bound may give additional insight into the nature of our schedule.

\begin{lemma} 

	\label{lemma: noescape}

	Case \ref{item: escape} cannot occur.

\end{lemma}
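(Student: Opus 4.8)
The plan is to show that the occurrence of Case~\ref{item: escape} contradicts our standing assumption that the three agents explore every cell of the grid. So assume Case~\ref{item: escape} occurs, for agent $a_i$ in a subschedule ending at some time $t$, after which the special ``escape'' rule is in force forever: in each of its iterations $a_r$ and $a_s$ are scheduled for one step each and then $a_i$ is scheduled for exactly one travel period of length $k$, with travel vector $(x,y)\neq(0,0)$. The heart of the argument is that under this schedule each of the three agents is confined to a band together with finitely many extra cells, and that a union of finitely many bands cannot cover $\mathbb Z^2$: inside the ball of radius $R$ around the origin a band contains only $\bigO(R)$ cells, while the ball contains $\Theta(R^2)$ cells, so for large $R$ it is not covered.

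\emph{Confining $a_i$.} By symmetry we may assume $x>0$. First I would prove, by induction over the iterations of the special rule, that at the start of every iteration $x_\tau^i-x_\tau^r>k$ and $x_\tau^i-x_\tau^s>k$. This holds at time $t$ by the defining condition of Case~\ref{item: escape}, and in one iteration $x^i$ grows by exactly $x\ge 1$ while each of $x^r$ and $x^s$ changes by at most $1$, so the gaps never shrink. Since $a_i$ has net horizontal displacement $x\ge 1$ over a travel period of $k$ steps, it makes at most $k-1$ westward moves during a period (otherwise its net displacement would be $(-k,0)$), so throughout an iteration the $x$-coordinate of $a_i$ stays strictly larger than those of $a_r$ and $a_s$; moreover a single step of $a_r$ or $a_s$ cannot bridge a gap exceeding $k\ge 1$. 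Hence after time $t$ agent $a_i$ never shares a cell with $a_r$ or $a_s$; in particular the special rule is well defined and $a_i$ never influences the other two agents. Finally, at the iteration boundaries $a_i$ sits at $c_t^i+m(x,y)$ for $m=0,1,2,\dots$, and within an iteration it moves at most $k$ cells away from such a point, so from time $t$ on all cells visited by $a_i$ lie within a bounded distance of the line through $c_t^i$ in direction $(x,y)$; together with the finitely many cells visited before $t$, the set explored by $a_i$ is contained in a band.

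\emph{Confining $a_r$ and $a_s$.} After time $t$ the agents $a_r$ and $a_s$ are scheduled one step each, alternately, forever, and never interact with $a_i$, so their joint trajectory is a deterministic function of the joint configuration $\bigl(q_\tau^r,q_\tau^s,c_\tau^r-c_\tau^s\bigr)$ together with the position within the two-step alternation — the only way the presence of the other agent enters a transition is through whether the two cells coincide. If $a_r$ and $a_s$ meet only finitely often, then from some time on each of them moves as a lone agent and, by the single-agent analysis, eventually repeats its behavior with a fixed travel vector, hence stays within a band (or within a bounded region, which is itself contained in a band, if that vector is zero). If they meet infinitely often, then some meeting configuration — the pair of states and the alternation position, with $c_\tau^r-c_\tau^s=\vec 0$ — recurs, say at times $T_1<T_2$; by determinism the joint trajectory from $T_1$ coincides with that from $T_2$, so the whole pair is periodic up to a common translation $\vec w$ of both agents, and again $a_r$ and $a_s$ each stay within a band.

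Putting the two parts together, every cell ever visited by any of the three agents lies in a union of at most three bands and a finite set, contradicting the assumption that the agents explore the whole grid; hence Case~\ref{item: escape} cannot occur. I expect the main obstacle to be the bookkeeping of the first part — getting the gap inequalities for $a_i$ exactly right, including the easy but separate observation that no collision can occur in a coordinate in which the travel vector vanishes once the other coordinate already separates the agents — together with making the recurrence argument for $a_r$ and $a_s$ fully precise.
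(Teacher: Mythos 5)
Your proof is correct and reaches the contradiction along the same overall skeleton as the paper --- show that $a_i$ escapes forever under the special rule, then argue that the resulting ``two agents plus one isolated agent'' system cannot cover the plane --- but the second half is handled by a genuinely different, self-contained argument. The paper, after establishing via the same gap-preserving induction that $a_i$ never meets $a_r$ or $a_s$ again, simply invokes the result of Emek et al.\ that a team of two agents (and hence also a single agent) can only explore a band of constant width. You instead prove the confinement of the pair $\{a_r, a_s\}$ directly: since the special rule fixes their schedule to a strict alternation and their joint dynamics is a translation-equivariant function of $(q^r, q^s, c^r - c^s)$ plus the alternation position, either they separate for good (reducing to the single-agent travel-vector analysis) or a meeting configuration recurs by pigeonhole, forcing the pair to be periodic up to a common translation; in both cases each agent is confined to a band, and your $\bigO(R)$-versus-$\Theta(R^2)$ counting finishes the job. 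What the paper's citation buys is brevity and a statement valid for \emph{arbitrary} adversarial schedules of the two-agent team; what your argument buys is self-containedness, exploiting that here the schedule is a concrete fixed alternation for which an elementary recurrence argument suffices. One shared technicality worth noting (present in the paper's own proof as well): the claim that each $k$-step block of $a_i$ has net displacement exactly $(x,y)$ strictly holds only once $a_i$ is in the periodic regime of its state sequence; since the pre-periodic prefix has length at most $N$, this is easily absorbed, but you should say so if you write the induction out in full.
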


\begin{proof}
	Recall that we assume (globally) that the three agents explore the entire infinite grid.
	Assume that Case \ref{item: escape} occurs and let $a_i$ denote the agent that would move on indefinitely without meeting another agent.
	Then, at the beginning of the first iteration according to the special rule, the distance of agent $a_i$ to any of the other agents is more than $k$ in at least one (of $x$- and $y$-) direction and $a_i$ moves away from the agents according to the travel vector.
	After each of the other agents makes a step, this distance is still at least $k$.
	Hence, agent $a_i$ cannot encounter one of the other agents during its next $k$ steps, since in total it moves away from the other agents, according to the specification of Case \ref{item: escape}.

	The direction of the travel vector also ensures that the distance to the other agents is again increased to more than $k$ (in at least one direction).
	Thus, the same arguments hold for the next iteration, and we obtain by induction that agent $a_i$ will never encounter another agent after the occurrence of Case \ref{item: escape}.
	It follows that, if three agents suffice to explore the grid, then also a team of two agents and a separate single agent can explore the grid without any communication between the team and the single agent.
	From~\cite{Emek15}, we know that this is not possible since a team of two agents (hence, also a single agent) can only explore a band of constant width.
\end{proof}

Following Lemma \ref{lemma: noescape}, we will assume in the following that Case \ref{item: escape} does not occur, i.e., each agent's subschedule ends because it encounters another agent or because it repeats a pair state/cell.
This allows us to group the possible subschedules of an agent into two categories:
We say that a subschedule $\mathcal{S}_j^i$ is of \emph{type 1} if $\mathcal{S}_j^i$ ends because of the condition given in Case \ref{item: meet}, and of \emph{type 2} if $\mathcal{S}_j^i$ ends because of the condition given in Case \ref{item: repeat}.

\begin{lemma}

	\label{lemma: type2}

	Any subschedule of type 2 consists of at most $N$ time steps.

\end{lemma}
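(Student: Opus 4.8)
The plan is to exploit the fact that during a subschedule of type~2 agent $a_i$ behaves as a single agent, so that the bound $N$ on the number of states directly controls the length of the subschedule. First I would observe that throughout a type-2 subschedule $\mathcal{S}_j^i$ the agent $a_i$ is alone in its cell at every step except possibly the first: the other two agents do not move during $\mathcal{S}_j^i$, so if $a_i$ shared a cell with one of them at some step $u\ge 1$ of $\mathcal{S}_j^i$, then Case~\ref{item: meet} would apply with $u_{\min}\le u$, making $\mathcal{S}_j^i$ of type~1. (It is possible that $a_i$ shares a cell with another agent at the very first step, but then it must leave that cell immediately, by the same reasoning.) Hence, indexing the steps of $\mathcal{S}_j^i$ by $0,1,2,\dots$, from step~$1$ on the state of $a_i$ is updated by iterating the ``nobody else present'' next-state map $g\colon Q\to Q$, so the sequence of states $a_i$ assumes from step~$1$ on is of the familiar $\rho$-shape: a transient of some length $\lambda\ge 0$ followed by a cycle of some length $\mu\ge 1$. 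Since the states preceding the first state repetition are pairwise distinct, $\lambda+\mu\le N$. Using the single-agent definitions (applicable here because $a_i$ is alone), let $v$ be the travel vector of this cycle and $\mu$ its travel period.

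The crucial step is to show that $v=0$. Suppose not. By the cycle structure, if $a_i$ is in the same state at two steps $t<t'$ of $\mathcal{S}_j^i$ with $t\ge 1$, then $t\equiv t'\pmod\mu$ and the two cells differ by the nonzero vector $\tfrac{t'-t}{\mu}\,v$; in particular $a_i$ never revisits a (state, cell) pair at steps $\ge 1$. The only remaining way for such a pair to recur is that $a_i$ returns, at some step $t\ge 1$, to the cell and state it had at step~$0$. If $\mathcal{S}_j^i$ begins with $a_i$ alone, the orbit structure rules this out as well, again since $v\neq 0$. If $\mathcal{S}_j^i$ begins with $a_i$ sharing its cell with another agent, that agent is stationary during $\mathcal{S}_j^i$, so $a_i$ returning to that cell at a positive step triggers Case~\ref{item: meet} and makes $\mathcal{S}_j^i$ of type~1 --- a contradiction. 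Hence no (state, cell) pair ever recurs, contradicting that $\mathcal{S}_j^i$ ended via Case~\ref{item: repeat}. Therefore $v=0$.

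Finally, with $v=0$ the cycle runs through the same fixed set of cells over and over, so every state occurring on the cycle is assumed at least twice in the same cell --- and these are the only states $a_i$ assumes twice in the same cell. Thus the distinguished state $q$ of Case~\ref{item: repeat} lies on the cycle, and whenever $a_i$ is in state $q$ at a positive step it is automatically in a cell in which it assumes $q$ at least twice. Therefore $\mathcal{S}_j^i$ ends at the first positive step at which $a_i$ is in state $q$; since $q$ occurs among the first $\lambda+\mu\le N$ states of the subschedule (and, in the degenerate case that $q$ is already $a_i$'s state at step~$0$, recurs one period $\mu\le N$ later), this step is at most $N$, which is the claimed bound.

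I expect the main obstacle to be the step $v=0$, since that is the only point where the argument must engage with the precise wording of the schedule --- in particular the priority of Case~\ref{item: meet} over Case~\ref{item: repeat}, and the corner case where the subschedule does not begin with $a_i$ alone. The rest is routine $\rho$-shape / pigeonhole bookkeeping, together with a careful reading of the ``smallest positive number of steps'' clause in Case~\ref{item: repeat}.
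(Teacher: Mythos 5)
Your proof is correct and follows essentially the same route as the paper's: by pigeonhole a state must repeat within $N$ steps, the repetition must occur in the same cell (a non-zero travel vector would make the type-2 terminating condition impossible to satisfy), and hence the distinguished state $q$ is reached in its recurring cell within the first $\lambda+\mu\le N$ steps. The only (harmless) imprecision is the claim that cycle states are the \emph{only} states assumed at least twice in the same cell --- the step-$0$ state could also recur in its starting cell at a transient step --- but in that case the terminating step is at most $\lambda\le N$, so the bound is unaffected.
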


\begin{proof}

	Assume for a contradiction that there is a subschedule $\mathcal{S}_j^i$ of type 2 that consists of at least $N + 1$ time steps and starts at some time $t$.
	Then, by the pigeonhole principle, there must be two points in time $t < t' < t'' \leq t + N + 1$ such that $q_{t'}^i = q_{t''}^i$.
	Moreover, it must also hold that $c_{t'}^i = c_{t''}^i$ since otherwise $a_i$ would move according to some non-zero travel vector (from time $t'$ onwards) which would imply that $\mathcal{S}_j^i$ is not of type 2.

	This implies that if $a_i$'s subschedule would also continue at and after time $t + N + 1$ on an empty grid, then $a_i$ would cycle through the same movement on and on, starting from time $t'$.
	Hence, if there is a cell $c$ that is visited by $a_i$ in some state $q$ in the (continued) movement after time $t''$, then there must also be a point in time before $t''$ (during $\mathcal{S}_j^i$) at which $a_i$ visits $c$ in state $q$.
	It follows from the definition of our schedule that $\mathcal{S}_j^i$ ends before time $t''$, yielding a contradiction to our assumption.

\end{proof}

\begin{lemma}
	\label{lemma: type1}
	Any subschedule $\mathcal{S}_j^i$ of type 1, where agent $a_i$ ends in the same cell from which it started, consists of at most $N (2N + 1)$ time steps.
	More generally, any subschedule $\mathcal{S}_j^i$ of type 1, where $a_i$ ends in a cell of distance at most $D$ from the cell from which it started, consists of at most $N (2N + 1 + D)$ time steps.
\end{lemma}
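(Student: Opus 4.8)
The plan is to combine the single-agent analysis recalled earlier with the fact that during a type-$1$ subschedule $\mathcal S_j^i$ the only agent that moves is $a_i$. Write $u_{\min}$ for the length of $\mathcal S_j^i$; since Case~\ref{item: meet} requires $u>0$, minimality of $u_{\min}$ means that $a_i$ is alone in its cell at every time $1,\dots,u_{\min}-1$, so for these $t$ its move from time $t$ to time $t+1$ is determined by $q_t^i$ alone (the sensed set is empty). Hence the sequence of positions $c_1^i,c_2^i,\dots$ of $a_i$ — continued beyond time $u_{\min}$ as if $a_i$ ran on an otherwise empty grid — is a single-agent run, and for $1\le t\le u_{\min}$ it agrees with the actual run. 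First I would apply that analysis to this run: among the $N+1$ states $q_1^i,\dots,q_{N+1}^i$ two coincide, so there are indices $1\le a<b\le N+1$ with $q_a^i=q_b^i$ and $q_1^i,\dots,q_{b-1}^i$ pairwise distinct; from time $a$ on the run is periodic with travel period $p:=b-a$ and travel vector $v:=c_b^i-c_a^i$, and for $t\ge a$, writing $t=a+kp+r$ with $0\le r<p$, we have $c_t^i=c_{a+r}^i+kv$. Since $a+p=b\le N+1$, we get $a+r\le b-1\le N$, and hence $\distance{c_{a+r}^i}{c_1^i}\le N-1$ and $p\le N$.

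Next I would split on whether $v=0$. If $v=0$, then for every $t$ with $1\le t\le u_{\min}$ one has $c_t^i\in\{c_1^i,\dots,c_{b-1}^i\}$ (for $t\ge b$ because $c_{a+kp+r}^i=c_{a+r}^i$, and trivially otherwise). In particular $c_{u_{\min}}^i=c_m^i$ for some $1\le m\le b-1\le N$; but $c_{u_{\min}}^i$ is shared with another agent, while $a_i$ is alone at all times in $\{1,\dots,u_{\min}-1\}$, so $m\ge u_{\min}$, giving $u_{\min}\le N$, well under $N(2N+1)$. If $v\ne0$, then $v$ is a nonzero integer vector, so the Manhattan length of $kv$ is at least $k$. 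This is where the distance hypothesis is used: $\distance{c_{u_{\min}}^i}{c_0^i}\le D$ and $\distance{c_0^i}{c_1^i}\le 1$ give $\distance{c_{u_{\min}}^i}{c_1^i}\le D+1$. Assuming $u_{\min}\ge a$ (if $u_{\min}<a$ then $u_{\min}<N$, done) and writing $u_{\min}=a+kp+r$, from $c_{u_{\min}}^i-c_1^i=kv+(c_{a+r}^i-c_1^i)$ and the triangle inequality we get $k\le\distance{c_{u_{\min}}^i}{c_1^i}+\distance{c_{a+r}^i}{c_1^i}\le(D+1)+(N-1)=D+N$. Therefore $u_{\min}=(a+r)+kp\le N+(D+N)\cdot N=N(N+D+1)\le N(2N+1+D)$; taking $D=0$ yields $u_{\min}\le N(N+1)\le N(2N+1)$, so both parts of the lemma follow.

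I do not expect a deep obstacle here; the step that needs the most care is the case split on the travel vector. When $v=0$ the agent does not drift, so the bound $D$ is vacuous and one must instead observe that the agent's position sequence is so short that a collision, if it occurs at all, must occur within the first $N$ steps. When $v\ne0$ the bound $D$ does the work, converting a bound on the agent's displacement into a bound on the number of completed travel periods. A minor bookkeeping point (already built into the argument above) is that $a_i$ may begin $\mathcal S_j^i$ already sharing a cell with another agent — for instance if the preceding subschedule ended with a collision in that cell — so the single-agent analysis must be anchored at time $1$ rather than time $0$; this is why the repeat index satisfies only $b\le N+1$, which costs the ``$+1$'' terms that the slack between $N(N+1)$ and $N(2N+1)$ absorbs. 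Beyond that it is just constant-chasing, together with the observation that a single move changes the Manhattan distance to any fixed cell by at most $1$.
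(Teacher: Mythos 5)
Your proof is correct and follows essentially the same route as the paper's: locate the first state repetition within the first $N$ steps of the subschedule, then use the fact that a nonzero travel vector forces a drift of at least one cell per travel period to bound the number of completed periods by $D + O(N)$. The paper phrases this as a proof by contradiction that tracks the $x$-coordinate after a WLOG normalization $x > 0$, whereas you bound the Manhattan norm of $kv$ directly via the triangle inequality, but the substance is identical (and your explicit handling of the $v=0$ case, which the paper dismisses more tersely, is sound).
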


\begin{proof}
	We start by proving the special case where $a_i$ ends in the same cell from which it started.
	Suppose for a contradiction that there is a subschedule $\mathcal{S}_j^i$ as described in the lemma that consists of more than $N (2N + 1)$ time steps.
	Let $t$ and $u$ denote the points in time when $\mathcal{S}_j^i$ starts and ends, respectively.
	Since $a_i$ does not encounter any other agent between time $t$ and time $u$, it behaves like a single agent on an empty grid between $t$ and $u$.
	In particular, there is a travel vector $(x,y)$ of agent $a_i$ from time $t+1$ to time $u-1$ since $N (2N + 1) - 1 > N$.

	For reasons of symmetry, we can assume w.l.o.g.\ that $x>0$ and $y \geq 0$.
	Note that $x = 0 = y$ is not possible since in that case $a_i$ would cycle through the same (cyclic) movement over and over without meeting any other agent, which would imply that $\mathcal{S}_j^i$ is not of type 1.
	Let $p$ be the travel period which, according to its definition, is at most $N$.
	Let $q$ be the state whose second occurrence during $\mathcal{S}_j^i$ (excluding the occurrence of the state at the beginning of $\mathcal{S}_j^i$) comes earliest.
	Let $t'$ be the time when $q$ occurs for the first time.
	Since $t' \leq t + N$, we know that $x_{t'}^i \geq x_t^i - N$.

	Now, as in each travel period $a_i$ increases the $x$-coordinate of the cell it occupies by at least $1$, it follows that at time $t' + 2N \cdot p$ the $x$-coordinate of the cell $a_i$ occupies is at least $x_t^i + N$.
	Furthermore, since in each further travel period agent $a_i$ would advance by at least one cell in (positive) $x$-direction in total and $p \leq N$, after time $t' + 2N \cdot p$ agent $a_i$ will never have an $x$-coordinate of less than $x_t^i + 1$, i.e., it will never reach $c_t^i$ then.
	But $a_i$ also cannot have visited $c_t^i (= c_u^i)$ between time $t+1$ and $t' + 2N \cdot p$ since $t' + 2N \cdot p \leq t + N (2N + 1)$ and we assumed that $\mathcal{S}_j^i$ consists of more than $N (2N + 1)$ time steps.
	Thus, we obtain a contradiction, which proves the first lemma statement.	

	For the more general second statement, by an analogous proof we obtain that after time $t' + 2N \cdot p + D \cdot p$ agent $a_i$ will never have an $x$-coordinate of less than $x_t^i + 1 + D$, i.e., it will never reach $c_u^i$ then.
	But, since $t' + 2N \cdot p + D \cdot p \leq t + N (2N + 1 + D)$, $a_i$ also cannot have visited $c_u^i$ between time $t+1$ and $t' + 2N \cdot p + D \cdot p$, under the assumption that $\mathcal{S}_j^i$ consists of more than $N (2N + 1 + D)$ time steps.
	Hence, this assumption must be false, and the lemma statement follows.
\end{proof}

\section{Traveling and Meeting}

Having defined and studied the schedule, we now proceed with our lower bound proof as described in Section \ref{sec:antstech}.
The next lemma shows that for each distance there is a point in time after which the farthest two agents are never closer than this distance. 

\begin{lemma}
	\label{lemma: distance}
	For each distance $D$ there is a time $T$ such that at any time $t \geq T$ the largest pairwise distance of the three agents is at least $D$.
\end{lemma}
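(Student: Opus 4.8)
The plan is to argue by contradiction, exploiting the finiteness of the set of agent configurations in any bounded region together with the fact (to be established, or at least invoked) that any two agents must meet infinitely often. Suppose for some distance $D$ there is no such time $T$; then there are infinitely many times $t_1 < t_2 < \dots$ at which all three agents lie within pairwise distance less than $D$ of each other, hence within a ball of radius $D$ of, say, $a_1$'s position. At each such time the three agents occupy a configuration — a triple of relative positions and a triple of states — drawn from a set whose size depends only on $D$ and $N$, and is therefore finite. By pigeonhole, some configuration $\Gamma$ (relative positions and states, forgetting the absolute location) recurs: there are times $t < t'$ among the $t_i$ with the same relative positions and states at $t$ and at $t'$.

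**Next I would** use the determinism of the automaton together with the fact that our fixed schedule is itself determined by the current configuration (this is exactly the content of Cases \ref{item: meet}–\ref{item: escape}, which depend only on the agents' states and relative positions, not on absolute coordinates — Lemma \ref{lemma: noescape} having already eliminated Case \ref{item: escape}, and Lemmas \ref{lemma: type2}, \ref{lemma: type1} ensuring the subschedules are finite). Hence the entire execution from time $t$ onward is a translate of the execution from time $t'$ onward: everything the agents do between $t$ and $t'$, they repeat, shifted by the displacement vector $v := c_t(a_1) - c_{t'}(a_1)$ (wait — more carefully, the displacement of the whole configuration between $t$ and $t'$), and this repeats forever. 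Thus the union of all cells ever visited is contained in the $D$-neighborhood of the line through the origin with direction $v$ — that is, in a band of width $O(D)$. (If $v = 0$ the agents are eventually confined to a bounded region, an even stronger contradiction.)

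**The final step** is to observe that a band — being a set of density zero in $\mathbb{Z}^2$, indeed missing infinitely many cells, e.g. every cell sufficiently far from $\ell$ in the perpendicular direction — cannot be all of $\mathbb{Z}^2$. This contradicts our global assumption that the three agents explore every cell of the grid. Hence the assumed bad distance $D$ cannot exist, and the lemma follows.

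**The main obstacle** I anticipate is justifying cleanly that "the schedule is a function of the configuration": one must check that each of Cases \ref{item: meet}, \ref{item: repeat} is triggered purely by the states-and-relative-positions data and produces a subschedule length that is likewise a function of that data, so that two identical configurations genuinely generate identical (up to translation) continuations. A subtle point is the tie-breaking in Case \ref{item: repeat}, which fixes a total order on states per agent once and for all — this is consistent across recurrences and so causes no trouble, but it should be remarked on. A secondary subtlety is that the recurring configuration should be one where the agents are all close together (within distance $D$); since we chose the recurring times from among the $t_i$, this is automatic, but one should be careful that the finite set of "close" configurations is the right pigeonhole universe. I would also make sure to invoke, or briefly prove, that two agents meet infinitely often only if it is actually needed — in the argument above it is not strictly required, since the contradiction comes purely from the band confinement versus full exploration; the "meet infinitely often" fact is used elsewhere in the proof outline, so I would keep this lemma's proof self-contained and short.
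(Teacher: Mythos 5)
Your proposal is correct and follows essentially the same argument as the paper: pigeonhole on the finitely many (state, relative-position) configurations occurring at the infinitely many ``close'' times, deduce that the execution from the second recurrence time onward is a translate of that from the first and hence periodic up to a fixed shift, and conclude that all explored cells lie in a band of constant width, contradicting full exploration. The one detail to add is that the pigeonholed data must also record which agent is scheduled to move next (as the paper's condition 3) does), so that the round-robin schedule genuinely continues identically from the two recurrence times; this only multiplies the finite configuration space by a constant.
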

\begin{proof}
	Suppose that the lemma statement is not true.
	Then there is an infinite sequence $\mathcal T$ of points in time such that at each of these points in time the largest pairwise distance of the three agents is less than $D$.
	Since the distances of the agents are less than $D$ at all points in time from $\mathcal T$ and the number of states the three agents can be in is finite, it follows that there must be points in time $t, t' \in \mathcal T$ such that 1) each agent is in the same state at $t$ and $t'$, 2) $x_t^i - x_t^j = x_{t'}^i - x_{t'}^j$ and $y_t^i - y_t^j = y_{t'}^i - y_{t'}^j$ for all $i, j \in \{ 1, 2, 3 \}, i \neq j$, and 3) the same agent is scheduled to move next.
	Since the agents are oblivious of the absolute coordinates of the grid, this implies that from time $t'$ on, the agents will repeat the exact behavior they showed starting at time $t$.
	(Note that we use here that the schedule following a configuration is uniquely determined by the above information.)
	Hence, at time $t' + (t' - t)$ the agents will again be in the exact same configuration and so on.

	Define $(x, y) = (x_{t'}^i - x_t^i, y_{t'}^i - y_t^i)$, where $i = 1$ (which implies that this equation also holds for $i = 2, 3$).
	Vector $(x, y)$ describes the total movement of each of the agents during each of the (repeating) time periods of length $t' - t$.
	It follows that each cell that has not been explored by time $t$ must be at distance at most $t' - t$ from some cell that is obtained by adding a multiple of the vector $(x,y)$ to one cell from $\{ c_t^1, c_t^2, c_t^3 \}$; otherwise it will never be explored.
	Since each such cell at distance at most $t' - t$ (which is constant) must lie in a band of constant width and ``direction'' $(x,y)$ that contains $c_t^1$, $c_t^2$ or $c_t^3$, there are infinitely many cells that must have been explored before time $t$.
	This yields a contradiction.
\end{proof}

\begin{figure}
	\centering
		\centering
		\includegraphics{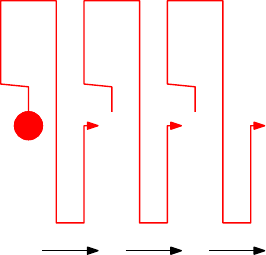}

 	\caption{An example showing a possible movement (red) of an agent whose travel vector is given by the black arrows. The agent performs the total movement given by the travel vector in at most $N$ time steps, or more precisely, during one travel period. 
	}
	\label{fig:travelVector}

\end{figure}

For any distance $D$, we denote by $T_D$ the smallest time $T$ for which it holds that at any time $t \geq T$ the largest pairwise distance of the three agents is at least $D$.
In the following we collect a number of useful definitions regarding the meetings of different agents.
In particular, we distinguish between three different types of agents at times when one agent is traveling from another agent to the far-away agent whose existence is certified by Lemma \ref{lemma: distance}.
For an illustration of how a large distance between agents influences choices of travel vectors, see Figure~\ref{fig:travelVector}.

\begin{definition}

	For any $t \geq 0$, we define the \emph{meeting set} $M_t$ as the set of agents that are not alone in the cell they occupy, at time $t$.
	We call the infinite sequence $(M_0, M_1, \dots)$ the \emph{meeting sequence}.
	If for a subsequence $(M_t, M_{t+1}, \dots, M_{t + i})$ of the meeting sequence it holds that $i>0$, $M_t \neq \emptyset \neq M_{t + i}$ and $M_{t + j} = \emptyset$ for all $0 < j < i$, then we call the pair $(t, t + i)$ a \emph{meeting pair}.
	Now, let $(t, u)$ be a meeting pair such that $\vert M_t \vert = 2 = \vert M_u \vert$ and $M_t \neq M_u$.
	Then we call $(t, u)$ a \emph{travel meeting pair}.
	Moreover, we call the (uniquely defined) agent $a$  contained in $M_t \cap M_u$ a \emph{traveling agent (for $(t, u)$)}, the agent contained in $M_t \setminus \{ a \}$ a \emph{source agent} and the agent contained in $M_u \setminus \{ a \}$ a \emph{destination agent}.

\end{definition}

In order to continue according to our high-level proof idea from Section \ref{sec:antstech}, we need a few helping lemmas that highlight properties of the previous definitions.
We start with a lemma that shows an important property of the meeting sequence:

\begin{lemma}

	\label{lemma: meeting}

	Each of the three agents is contained in infinitely many of the $M_t$ from the meeting sequence.

\end{lemma}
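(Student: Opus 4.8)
The plan is to argue by contradiction: suppose some agent — say $a_i$ — is contained in only finitely many of the meeting sets $M_t$. Then there is a time $T^*$ after which $a_i$ is always alone in its cell, i.e.\ for every $t \geq T^*$, the cell $c_t(a_i)$ contains no other agent. The goal is to derive a contradiction with the global assumption that the three agents explore the entire grid, by showing that from time $T^*$ onward agent $a_i$ behaves like an isolated single agent, hence (by the single-agent analysis recalled from \cite{Emek15} in the preliminaries) explores only a band of constant width — while the remaining two agents $a_j, a_k$, even if they were allowed to explore everything, only add finitely many new cells per "period" and so also cover at most a band of constant width. Two bands of constant width cannot cover $\mathbb{Z}^2$, giving the contradiction.

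First I would use the structure of the schedule $\mathcal S$. Recall that $\mathcal S$ cycles through $a_1, a_2, a_3$ in subschedules, and by Lemma "noescape" Case~\ref{item: escape} never occurs, so every subschedule is of type~1 or type~2. I claim that after time $T^*$, agent $a_i$ can only be the agent whose subschedule is executed finitely many more times in a relevant way — more carefully: consider any subschedule $\mathcal S_m^i$ of $a_i$ starting after $T^*$. It cannot be of type~1, because type~1 means the subschedule ends with $a_i$ entering a cell occupied by another agent, i.e.\ $a_i \in M_u$ for the end time $u$, contradicting $u \geq T^*$. Hence every subschedule of $a_i$ after $T^*$ is of type~2, and by Lemma "type2" it consists of at most $N$ time steps, and moreover (as in the proof of Lemma "type2") during it $a_i$ returns to a previously-visited state/cell pair, so $a_i$ would cycle forever on an empty grid from that point. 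But the subschedule then ends and control passes to $a_j$; when $a_i$ is next scheduled it resumes in the same state (since nothing changed its state — it met no one), so effectively $a_i$, across all its post-$T^*$ subschedules, traces out exactly the cyclic orbit of a single isolated automaton. A single agent's orbit, having a travel vector $(x,y)$ and travel period $\leq N$, visits only cells within constant distance of the ray $\{c + t(x,y) : t \geq 0\}$ — that is, a band of constant width — exactly the statement imported from \cite{Emek15}. So $a_i$ explores only finitely many cells outside one fixed band.

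Next I would handle $a_j$ and $a_k$. Here is the subtlety and the main obstacle: $a_j$ and $a_k$ may still meet each other infinitely often and could a priori explore a lot. However, I would argue they too are confined: if $a_j$ and $a_k$ never meet each other after some time, each is eventually an isolated single agent and confined to a band of constant width, and three bands still cannot cover the plane. If $a_j$ and $a_k$ do meet infinitely often, then essentially we have reduced to a system of two agents (plus the harmless isolated $a_i$), and by the two-agent result from \cite{Emek15} cited in the proof of Lemma "noescape" — a team of two agents can only explore a band of constant width — they are confined to a band of constant width as well. Either way, the union of the explored cells of $a_i$ and of $\{a_j,a_k\}$ is contained in a union of two bands of constant width together with finitely many extra cells, which is not all of $\mathbb{Z}^2$. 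This contradicts the global assumption that the three agents explore the grid. The main work — and where I would be most careful — is making the "reduced to two agents" step rigorous: one must check that the behavior of $a_j$ and $a_k$ after $T^*$ is genuinely that of a two-agent system (the presence of $a_i$ in its own cell never affects them, since $a_i$ is never in their cell after $T^*$), so that the \cite{Emek15} two-agent band bound legitimately applies; and similarly that "band of constant width'' statements compose correctly. Given those pieces, the contradiction follows and the lemma holds.
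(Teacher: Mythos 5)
Your proposal is correct and follows essentially the same route as the paper: assume some agent $a_i$ meets the others only finitely often, observe that from some time on the system decomposes into an isolated single agent and an independent two-agent team, and invoke the result from~\cite{Emek15} (as in the proof of Lemma~\ref{lemma: noescape}) that such a team, and hence also a single agent, explores only a band of constant width, so the union cannot cover the grid. The paper compresses this into a one-line reference to the argument of Lemma~\ref{lemma: noescape}; your additional bookkeeping about type-1 versus type-2 subschedules is harmless but not needed.
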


\begin{proof}

	Suppose that there is an agent $a_i$ that is not contained in infinitely many of the $M_t$, i.e., there is a point in time $u$ such that $a_i \notin M_t$ for all $t \geq u$.
	Then, starting from time $u$, the exploration by the two agents $a_r, r \neq i$ is entirely independent of the exploration by agent $a_i$ since they never meet again.
	Thus, we get a contradiction analogously to the argumentation in the proof of Lemma \ref{lemma: noescape}.
\end{proof}

Next, we study travel meeting pairs more closely.
In Lemma \ref{lemma: once}, we present bounds on the number of subschedules of the different types of agents in the time frame given by a travel meeting pair, and examine the types of the subschedules.
Afterwards, in Lemma \ref{lemma: betweentravels}, we bound the number of time steps between two subsequent travel meeting pairs from above.
In both cases, the results only hold from a large enough point in time onwards, but this is sufficient for our purposes since before that point in time only a constant number of cells were explored.
Note that, in general, we do not attempt to minimize the dependence on $N$ in our bounds as showing the finiteness of certain parameters is, again, sufficient for our purposes.
Instead we prefer to choose the simplest arguments that lead to the desired finiteness results, even if they augment the actual bound by a few factors of $N$.

\begin{lemma}

	\label{lemma: once}

	There is a point in time $T$ such that, for each travel meeting pair $(t, u)$ with $t \geq T$, the following properties hold:

	\begin{enumerate}

		\item The traveling agent for $(t, u)$ is scheduled exactly once (for a number of time steps) between time $t$ and time $u$.

		\item The subschedule of the traveling agent is of type $1$ and ends exactly at time $u$.

		\item The source and the destination agent for $(t,u)$ are scheduled at most once (for a number of time steps).

		\item If the source or the destination agent is scheduled, then its subschedule is of type 2.

	\end{enumerate}

\end{lemma}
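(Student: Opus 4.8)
The plan is to analyze the block of subschedules lying inside the time window $[t,u]$ of a travel meeting pair, using two facts: that no two agents share a cell at any time strictly between $t$ and $u$, and that Lemma~\ref{lemma: distance} forces a large separation at both $t$ and $u$ once we fix $D$ to be a sufficiently large multiple of $N$ (say $D>4N$) and set $T:=T_D$. Throughout, write $M_t=\{a,s\}$ and $M_u=\{a,d\}$, where $a$ is the traveling, $s$ the source and $d$ the destination agent, and put $P:=c_t^a=c_t^s$. Then Lemma~\ref{lemma: distance} applied at $t$ says $d$ is at distance $\ge D$ from $P$, and applied at $u$ it says $s$ and $a$ are at distance $\ge D$ from each other.

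First I would pin down the skeleton of the schedule on $[t,u]$. Using that $(t,u)$ is a \emph{travel} meeting pair (so $|M_t|=|M_u|=2$ and $M_t\neq M_u$) together with $M_{t+1}=\dots=M_{u-1}=\emptyset$, a short case analysis shows that no subschedule straddles time $t$ or time $u$; hence the steps in $[t,u]$ decompose into complete subschedules $\mathcal S^{(1)},\dots,\mathcal S^{(r)}$ that cycle through the three agents in the fixed round-robin order, with $\mathcal S^{(1)}$ starting at time $t$ and $\mathcal S^{(r)}$ ending at time $u$. Since $M_{t+1}=\dots=M_{u-1}=\emptyset$, none of $\mathcal S^{(1)},\dots,\mathcal S^{(r-1)}$ can end because an agent enters an occupied cell, so Case~\ref{item: meet} does not govern them; as Lemma~\ref{lemma: noescape} excludes Case~\ref{item: escape}, each of them falls under Case~\ref{item: repeat} and is of type~2. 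By the same dichotomy, and since the moving agent must lie in $M_u$ at time $u$, $\mathcal S^{(r)}$ is of type~1 and ends exactly at time $u$. (The boundary check that nothing straddles $t$ or $u$, though elementary, is one of the more delicate points.)

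The key ingredient — and, I expect, the main obstacle — is a confinement property of type~2 subschedules: once an agent finishes a type~2 subschedule in some cell $c$ and state $q$, then, as long as it does not meet another agent, every subsequent step of that agent stays within the ball of radius $N$ around $c$. The reason is that a type~2 subschedule ends with the agent about to repeat a (state, cell) pair, and since Lemma~\ref{lemma: noescape} rules out a nonzero travel vector occurring without such a repeat, the agent's solo movement from $(q,c)$ must be periodic with period at most $N$ and travel vector $(0,0)$; each later subschedule then merely continues this same in-place cycle. Combined with Lemma~\ref{lemma: type2} (a type~2 subschedule has at most $N$ steps, hence moves at most $N$ cells), this yields: any agent that in $[t,u]$ performs only type~2 subschedules — or only type~2 subschedules followed by $\mathcal S^{(r)}$ — never leaves the ball of radius $2N$ about its time-$t$ cell.

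Finally I would combine these via the separation at time $u$. If $\mathcal S^{(r)}$ belonged to the destination agent $d$, then $a$ and $s$ would perform only type~2 subschedules in $[t,u]$ and so be within $2N$ of $P$ at time $u$; since $d$ meets $a$ at time $u$, $d$ would be within $2N$ of $P$ then as well, so all three agents would lie in a ball of radius $2N$, contradicting Lemma~\ref{lemma: distance} because $D>4N$. Hence $\mathcal S^{(r)}$ is the traveling agent's subschedule. If $a$ had any further subschedule in $[t,u]$ it would be of type~2 (being distinct from $\mathcal S^{(r)}$), and then confinement would keep $a$ within $2N$ of $P$ throughout $[t,u]$ — even during $\mathcal S^{(r)}$ — and the same contradiction would recur. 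Therefore $a$ is scheduled exactly once in $[t,u]$, by $\mathcal S^{(r)}$, which is of type~1 and ends at time $u$; this is items~1 and~2. Since $a$ occurs exactly once in a round-robin over three agents, $r\le 3$, so each of $s$ and $d$ occurs at most once among $\mathcal S^{(1)},\dots,\mathcal S^{(r-1)}$, and all of those subschedules are of type~2; this is items~3 and~4. With the skeleton and the confinement property in hand, this last distance argument is routine.
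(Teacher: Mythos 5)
Your proof is correct and follows essentially the same route as the paper's: both arguments combine the confinement of agents performing only type-2 subschedules within an $O(N)$-ball, the separation guaranteed by Lemma~\ref{lemma: distance} for a suitable $D = \Theta(N)$, and the round-robin structure of the schedule to pin down which subschedule is the type-1 one and to bound the others. The only differences are presentational (you first lay out the full decomposition of $[t,u]$ into complete subschedules and explicitly exclude the destination agent as the mover, which the paper handles implicitly) and the choice of constants ($T_D$ with $D>4N$ versus the paper's $T_{2N+1}$).
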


\begin{proof}

	Recall the definition of $T_D$ for any distance $D$.
	Let $T \geq T_{2N + 1}$, and consider an arbitrary travel meeting pair $(t, u)$ with $t \geq T$ and traveling agent $a_i$.
	Observe that if the source agent is scheduled between time $t$ and time $u$, then its subschedules must be of type 2, because the source agent is not contained in the meeting set $M_u$.
	Hence, if $a_i$ is not scheduled at all between time $t$ and time $u$, then the source agent must be scheduled at most once (because of the specification of our schedule) which implies that its distance from $c_t^i$ at time $u$ is at most $N$, by Lemma \ref{lemma: type2}.
	But since in this case $a_i$ and the destination agent meet at $c_t^i$ at time $u$, we obtain a contradiction to the fact that $T \geq T_{2N + 1}$.
	Thus, we know that $a_i$ is scheduled at least once between time $t$ and time $u$.	

	Now, assume for a contradiction that the first subschedule of $a_i$ between time $t$ and time $u$ is of type 2.
	This implies that if one would schedule $a_i$ on and on, it would repeat a state in the same (empty) cell after at most $N + 1$ time steps and then cycle through (a part of) the same movement it performed before.
	Hence, even if there are more subschedules for $a_i$ than one (between time $t$ and time $u$), it will never reach a cell that has a distance of more than $N$ from $c_t^i$.
	Since analogous statements hold for the source agent, we know that at time $u$ the distance between the source agent and the cell where $a_i$ and the destination agent meet is at most $2N$ which again contradicts our specification of $T$.
	Thus, we know that the first subschedule of $a_i$ is of type 1.

	It follows that $a_i$'s subschedule ends exactly at time $u$ since the subschedule must end with $a_i$ meeting the destination agent, which also implies that $a_i$ is scheduled exactly once between time $t$ and time $u$.
	Moreover, the subschedules of the source and the destination agent (if they are scheduled at all between time $t$ and time $u$) must be of type 2 since $(t,u)$ is a (travel) meeting pair.
	Furthermore, by the nature of our schedule, the source and the destination agent must be scheduled at most once between time $t$ and time $u$.
\end{proof}

\begin{lemma}
	There is a point in time $T$ such that the following holds:
	If $(t,u)$ and $(t', u')$ are travel meeting pairs such that $T \leq t < t'$ and there exists no travel meeting pair $(t'', u'')$ with $t < t'' < t'$, then $t' - u \leq 8 (N + 1)^5$.
	\label{lemma: betweentravels}
\end{lemma}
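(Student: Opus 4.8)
The plan is to bound $t'-u$ by separately bounding (i) the number of subschedules that intersect $[u,t']$ and (ii) the length of each such subschedule, and multiplying the two bounds. Choose $T$ to be at least the threshold from Lemma~\ref{lemma: once} and at least $T_{2N+1}$, and fix consecutive travel meeting pairs $(t,u)$ and $(t',u')$ with $T\le t<t'$. A first observation is that every meeting pair $(s,s')$ with $u\le s<s'\le t'$ is a \emph{non}-travel meeting pair: a travel meeting pair starting in this range would start strictly after $t$ and strictly before $t'$, contradicting that $(t',u')$ is the next one. Writing $u=v_0<v_1<\dots<v_m=t'$ for the meeting times in $[u,t']$, this means that for every $j$ either $M_{v_j}=M_{v_{j+1}}$ or one of $M_{v_j},M_{v_{j+1}}$ has three elements; in particular $a_s$ (the agent not in $M_u$, which is at distance $\ge 2N+1$ from the others at time $u$ by the choice of $T$) plays a distinguished role.

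The engine for both bounds is a loop argument exploiting that Case~\ref{item: escape} is excluded (Lemma~\ref{lemma: noescape}). Because a type~2 subschedule ends when the agent revisits a state \emph{in the same cell}, such a subschedule confines the agent to a loop with zero travel vector, and iterating this the agent's (state, cell) at its own scheduling times becomes periodic and spatially confined after $O(N)$ of its subschedules. Hence, between two consecutive meeting times, only $O(\operatorname{poly}(N))$ subschedules (all but the last of type~2, hence of length $\le N$ by Lemma~\ref{lemma: type2}; the last of type~1) can occur — otherwise all three agents would simultaneously cycle in bounded regions, so only finitely many cells would ever be visited, contradicting that the grid is explored. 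Using this, I would next show that no type~1 subschedule with start-to-end distance exceeding a suitable $\operatorname{poly}(N)$ threshold (call such a subschedule \emph{long}) begins inside $(u,t')$: if agent $a$ performs a long travel there, let $v_{\mathrm{prev}}\in[u,t')$ be the last meeting before it and $v_{\mathrm{next}}$ the first meeting after it; by the previous sentence only $O(\operatorname{poly}(N))$ short subschedules separate $v_{\mathrm{prev}}$ from the start of $a$'s travel, so at that moment $a$ and the other two agents are within $O(\operatorname{poly}(N))$ of their positions at $v_{\mathrm{prev}}$, and a short case analysis on $M_{v_{\mathrm{prev}}}$ shows that $a$ must be paired with one agent at $v_{\mathrm{prev}}$ and end paired with the far-away third agent, so that $M_{v_{\mathrm{prev}}}$ and $M_{v_{\mathrm{next}}}$ are distinct two-element sets and $(v_{\mathrm{prev}},v_{\mathrm{next}})$ is a travel meeting pair strictly between $(t,u)$ and $(t',u')$ — a contradiction. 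Consequently every subschedule intersecting $[u,t']$ has length $O(\operatorname{poly}(N))$ (Lemmas~\ref{lemma: type2} and~\ref{lemma: type1}), and thus $v_{j+1}-v_j=O(\operatorname{poly}(N))$ for all $j$.

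It remains to bound $m$, the number of meeting times in $[u,t']$. Group these meeting times into maximal runs on which $M_{v_j}$ is constant (runs are separated by three-element meeting sets). On a run whose meeting set is a two-element set $\{a,b\}$, the third agent is never in a meeting and — as a type~1 subschedule would end in a meeting it is not part of, and long travels are excluded — performs only type~2 subschedules, hence is confined and effectively does not interact with $a$ and $b$; therefore the evolution of $\{a,b\}$ is determined by the pair's configuration at a meeting, namely the two states of $a,b$ (their relative position being $0$) together with the schedule phase, which ranges over only $O(\operatorname{poly}(N))$ values. A repetition of this configuration within a run would make $\{a,b\}$ repeat its movement forever, so the whole configuration would drift by a fixed vector per period and explore only a band — a contradiction; hence each run has $O(\operatorname{poly}(N))$ meetings. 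An analogous loop argument bounds the number of three-element meetings (all agents co-located, so again only $O(\operatorname{poly}(N))$ configurations) and the number of runs, giving $m=O(\operatorname{poly}(N))$. Combining, $t'-u\le m\cdot O(\operatorname{poly}(N))\cdot O(\operatorname{poly}(N))=O(\operatorname{poly}(N))$, and a careful (non-optimized) accounting of the constants yields the stated bound $8(N+1)^5$.

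\textbf{Main obstacle.} The genuinely delicate point is the no-long-travel step together with the book-keeping in the last paragraph: one must make sure that \emph{every} long travel occurring between $(t,u)$ and $(t',u')$ is forced to come with a travel meeting pair, which requires carefully ruling out that a far-away agent rejoins the other two in a three-element meeting that merely "resets" the situation without producing a travel meeting pair. Controlling this interplay between the schedule's Case~\ref{item: meet}/Case~\ref{item: repeat} dichotomy, the loop arguments, and three-agent meetings — while keeping all the involved quantities polynomial in $N$ — is where the real work lies.
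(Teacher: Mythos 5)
Your overall decomposition (number of meetings times the time between consecutive meetings) matches the paper's, and your pigeonhole step --- too many meetings of the pair $\{a,b\}$ forces a repeated (states, schedule-phase) configuration, after which the pair's movement is periodic --- is exactly the paper's engine. But the contradiction you draw from that repetition is not valid. You claim the repetition makes $\{a,b\}$ ``repeat its movement forever,'' so that the configuration drifts along a band. It does not repeat forever: the periodic movement persists only until one of $a,b$ enters the cell of the third agent, and this interruption is guaranteed to occur, because $(t',u')$ is a travel meeting pair and at time $u'$ one member of the pair meets the third agent. So the periodic phase is finite and no band contradiction is available; worse, the number of periods before the interruption is roughly (distance to the third agent) divided by (drift per period), which is not bounded by any function of $N$, so your bound on the length of a run collapses. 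The paper closes exactly this hole with a different contradiction: it first shows (via the observation that three consecutive type-2 subschedules of the pair would freeze them in a loop forever, hence between any two type-1 subschedules there are at most two type-2 ones) that during the periodic phase the two agents of the pair stay within distance $N(4N+1)$ of each other; consequently, at the moment one of them meets the third agent, all three pairwise distances are at most $N(4N+1)$, contradicting Lemma~\ref{lemma: distance}. This is also why the paper must take $T := T_{N(4N+1)+1}$ rather than your $T_{2N+1}$.

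A second, smaller issue: your ``no long travel'' step and the bookkeeping around three-element meetings are the places you yourself flag as delicate, and they remain unresolved (a far-away agent arriving at a cell already containing both others produces a meeting pair that is not a travel meeting pair, so your intended contradiction does not fire). The paper avoids this machinery entirely: with its choice of $T$ it argues that every nonempty meeting set between $u$ and $t'$ equals $M_u$, and it bounds the length of each type-1 subschedule not by excluding ``long travels'' but by applying Lemma~\ref{lemma: type1} with $D = 2N$, using once more that the pair stays close. You would need to import both of these ingredients for your outline to go through.
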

\begin{proof}
	Observe that from the definition of a travel meeting pair it follows that $t' \geq u$.
	Set $T := T_{N(4N + 1)+1}$ and let $t, u, t', u'$ be as described in the lemma.
	W.l.o.g., let $a_1$ and $a_2$ be the agents contained in $M_u$.
	Let $t_1 < t_2 < \dots < t_k$ be exactly the points in time $t_j$ between $u$ and $t'$
	for which $M_{t_j} \neq \emptyset$ holds.
	It follows that all $M_{t_j}$ are identical to $M_u = M_{t'}$.

	We claim that $k < (2 N^2 + 1)(N + 1)$.
	Suppose for a contradiction that $k \geq (2 N^2 + 1)(N + 1)$.
	Then, there must be at least $2 N^2 + 1$ indices $g \in \{ 1, \dots, k \}$ such that $a_1$ or $a_2$ are scheduled to move at time $t_g$ since each subschedule of $a_3$ between time $u$ and $t'$ is of type 2 and hence consists of at most $N$ time steps, by Lemma \ref{lemma: type2}.
	It follows that there must be some $1 \leq g < h \leq k$ such that $a_1$ and $a_2$ are in the exact same (pair of) states at time $t_g$ and at time $t_h$, and the same agent is scheduled next.
	This implies that $a_1$ and $a_2$ go through the same movement that they executed between time $t_g$ and $t_h$, over and over again, starting from time $t_h$, until at least one of them encounters agent $a_3$.
	Recall that our schedule is oblivious, i.e., only depends on the current states and locations of the agents and the information which agent is scheduled next.

	During this movement, i.e., anytime between $t_g$ and $u'$, our agents $a_1$ and $a_2$ cannot move too far from each other as we show in the following:
	If an agent executes a subschedule of type 2 and then its next subschedule is again of type 2 (and no other agent is in the cell from which this second subschedule starts), then the agent ends both subschedules in the same cell and the same state, due to the specification of type 2 subschedules.
	Hence, if $a_1$ and $a_2$ together perform three consecutive subschedules of type 2 (disregarding any subschedules of $a_3$), then in each subsequent subschedule they will repeat the same movement as in the last subschedule, until one of them encounters $a_3$.\footnote{This statement does not necessarily hold for only two consecutive subschedules of type 2 since the agent $a$ who moved first may encounter the second agent (who moved during the subsequent subschedule) during $a$'s next subschedule.}
	Since we already established that $a_1$ and $a_2$ would repeat the movement they executed between time $t_g$ and $t_h$ (which includes subschedules of type 1) over and over again if none of them met $a_3$, it cannot be the case that there are three consecutive subschedules of type 2 (of $a_1$ and $a_2$).
	Hence, between any two subschedules (of the agents $a_1$ and $a_2$) of type 1 between time $t_g$ and time $u'$, there are at most two subschedules of type 2 (of those agents).
	Now by Lemma \ref{lemma: type2} and Lemma \ref{lemma: type1}, the fact that after each subschedule of type 1 the agents $a_1$ and $a_2$ are in the same cell, and our established observation about the cyclic movement following $t_h$, it follows that the maximum distance of the two agents between time $t_g$ and time $u'$ is at most $N(2N + 1 + D)$ where $D = 2N$.
	Hence, when one of the two agents encounters agent $a_3$, then the other is at distance at most $N(4N + 1)$.
	This contradicts the fact that $t \geq  T_{N(4N + 1)+1}$ and proves the claim.

	From the above, we obtain the following picture:
	There are at most $k < (2 N^2 + 1)(N + 1)$ subschedules of type 1 between time $u$ and $t'$ (since, when a subschedule of type 1 ends, the corresponding element from the meeting sequence is non-empty).
	Between any two subschedules of type 1 (and possibly before the first/after the last) there are at most two subschedules of type 2 of agents $a_1$ and $a_2$, which gives us a total of at most $2 \cdot (2 N^2 + 1)(N + 1)$ subschedules of type 2 of agents $a_1$ and $a_2$ together between time $u$ and $t'$.
	For agent $a_3$, we obtain an upper bound of $1/2 \cdot 3 \cdot (2 N^2 + 1)(N + 1) + 1$ for the number of subschedules between time $u$ and $t'$ (which are all of type 2).
	Now, by Lemma \ref{lemma: type2} and Lemma \ref{lemma: type1}, we obtain that $t' - u \leq (7/2 \cdot (2 N^2 + 1)(N + 1) + 1)N + (2 N^2 + 1)(N + 1)N(2N + 1 + D)$ where $D = 2N$.
	Hence, $t' - u \leq 8 (N + 1)^5$.
\end{proof}	

Using Lemma \ref{lemma: betweentravels}, we show in the following that for any travel meeting pair $(t, u)$, the information about the states of the agents, which two agents are in the same cell, and who is scheduled next, all at time $u$, already uniquely determines a lot of information about the agents at the starting time of the next travel meeting pair.
Again, this result only holds from a sufficiently large point in time onwards.
This concludes our collection of helping lemmas.

\begin{lemma}
	There is a point in time $T$ such that the following holds:
	For any two subsequent travel meeting pairs $(t, u)$, $(t', u')$ with $T \leq t < t'$, the tuple $(q_u^1, q_u^2, q_u^3, \anext{u}, M_u)$ uniquely determines the tuple $(q_{t'}^1, q_{t'}^2, q_{t'}^3, c_{t'}^1 - c_u^1, c_{t'}^2 - c_u^2, c_{t'}^3 - c_u^3, \anext{t'}, M_{t'})$, where $\anext{u}$, resp. $\anext{t'}$, denotes the agent that is scheduled at time $u$, resp.\ $t'$.
	\label{lemma: traveltravel}
\end{lemma}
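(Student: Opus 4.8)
The plan is a coupling argument. Fix $C := 8(N+1)^5$ (the bound of Lemma~\ref{lemma: betweentravels}), let $D := 40(N+1)^5$ (any sufficiently large polynomial in $N$ works), and set $T := T_D$; since $T_D$ is monotone in $D$ and $D$ dominates the distance thresholds implicit in Lemmas~\ref{lemma: once} and~\ref{lemma: betweentravels}, this $T$ is admissible for those lemmas too. Suppose $(t_1,u_1),(t_1',u_1')$ and $(t_2,u_2),(t_2',u_2')$ are two pairs of subsequent travel meeting pairs with $t_1,t_2 \ge T$ and $(q_{u_1}^1,q_{u_1}^2,q_{u_1}^3,\anext{u_1},M_{u_1}) = (q_{u_2}^1,q_{u_2}^2,q_{u_2}^3,\anext{u_2},M_{u_2})$; I will show the two output tuples coincide. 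Relabel the agents so that $M_{u_1} = M_{u_2} = \{a_1,a_2\}$ (consistent, since the two meeting sets are equal). Then $c_{u_j}^1 = c_{u_j}^2$ and, by Lemma~\ref{lemma: distance}, $\distance{c_{u_j}^3}{c_{u_j}^1} \ge D$ for $j = 1,2$.

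The core claim is that the \emph{relative trajectory on the window} $[u_j, u_j+C+1]$ --- the collection, over $s \in \{0,\dots,C+1\}$, of the states $q_{u_j+s}^i$, the offsets $c_{u_j+s}^i - c_{u_j}^i$, and the identity of the agent scheduled at time $u_j+s$ --- is the same for $j=1$ and $j=2$. Granting this, the lemma follows. First, $t_j' - u_j$ equals the largest $s \le C$ with $M_{u_j+s} \ne \emptyset$: by Lemma~\ref{lemma: betweentravels} we have $t_j'-u_j \le C$; writing $u_j = m_0 < m_1 < \cdots$ for the times with nonempty meeting set that are $\ge u_j$, and $(t_j',u_j') = (m_k,m_{k+1})$, every meeting pair $(m_i,m_{i+1})$ with $i<k$ is non-travel, and since by Lemma~\ref{lemma: distance} no three agents are ever co-located at a time $\ge T$ (so every nonempty meeting set then has size exactly $2$), ``non-travel'' forces $M_{m_i}=M_{m_{i+1}}$; hence $M_{m_0}=\dots=M_{m_k}=\{a_1,a_2\}$, so $M_{t_j'}=M_{u_j}$, whereas $M_{m_{k+1}}=M_{u_j'}\ne\{a_1,a_2\}$ contains $a_3$. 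Since every agent moves at most one cell per step, $a_3$ stays at distance $\ge D-2(C+1)>0$ from $a_1$ and $a_2$ throughout the window, so it is co-located with neither in $[u_j,u_j+C]$; therefore $u_j'=m_{k+1}>u_j+C$, and the times with nonempty meeting set in $[u_j,u_j+C]$ are exactly $m_0,\dots,m_k$ with $t_j'=m_k$ the largest. Thus $t_j'-u_j \in \{0,\dots,C\}$ lies strictly inside the window, and the whole output tuple $(q_{t_j'}^1,q_{t_j'}^2,q_{t_j'}^3,c_{t_j'}^1-c_{u_j}^1,c_{t_j'}^2-c_{u_j}^2,c_{t_j'}^3-c_{u_j}^3,\anext{t_j'},M_{t_j'})$ is read off the relative trajectory on the window, which is the same for $j=1,2$.

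To prove the core claim I induct on $s$. For the base case I invoke Lemma~\ref{lemma: once}: the subschedule of the traveling agent for $(t_j,u_j)$ ends exactly at $u_j$, so at time $u_j$ a fresh subschedule begins, necessarily for the agent $\anext{u_j}$, and this agent and its state are the same in both executions. For the inductive step, assume that up to time $u_j+s$ the two executions agree on all states, on the offsets $c_{\cdot}^i-c_{u_j}^i$, and on the current schedule state (same scheduled agent $a$, at the same point of the same subschedule). The move of $a$ depends only on its state (which agrees) and the set of states of agents sharing its cell. Since $a_3$ stays at distance $\ge D-2(C+1)>0$ from $a_1$ and $a_2$ everywhere in the window, $a_3$ never shares a cell with $a_1$ or $a_2$ there; hence the set sensed by $a$ is empty if $a=a_3$, and otherwise is determined by whether $a_1$ and $a_2$ coincide (a condition on offsets, agreeing) together with the other agent's state (agreeing). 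So $a$'s new state and move agree, preserving the hypothesis on states and offsets. Whether $a$'s subschedule terminates at step $s+1$ also agrees: by Case~\ref{item: meet} it terminates iff $a$ now shares a cell with another agent, i.e.\ $a\in\{a_1,a_2\}$ and $a_1,a_2$ coincide --- a condition on offsets; otherwise the subschedule is governed by Case~\ref{item: repeat} (Case~\ref{item: escape} cannot occur, Lemma~\ref{lemma: noescape}), whose termination rule depends only on $a$'s state evolution from the start of its subschedule, which agrees. If it ends, the next subschedule (for the cyclic successor of $a$, identical in both executions) begins and the schedule state stays synchronized, closing the induction.

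The main obstacle is that the termination of a subschedule of $a\in\{a_1,a_2\}$ under Case~\ref{item: meet} could a priori depend on the exact position of the far-away agent $a_3$, since $a$ might be steered onto $a_3$'s cell. This is ruled out as follows: by Lemma~\ref{lemma: noescape}, the isolated movement of $a$ from the start of any of its subschedules enters, after at most $N$ steps, a closed loop of period at most $N$, and hence never leaves an $O(N)$-ball around that starting cell; but whenever such a subschedule starts inside the window, $a_3$'s cell is at distance $\ge D-2(C+1)$ from that starting cell, and (as $D$ is a polynomial in $N$ chosen far larger than $N$) $a$'s isolated movement can never reach $a_3$'s cell within the window. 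So Case~\ref{item: meet}, when it fires for $a$ inside the window, does so via the other of $a_1,a_2$, whose relative position is part of the shared data. A minor point is the window length: it must exceed $C$ so that $t_j'$ and the move performed at time $t_j'$ are captured (using Lemma~\ref{lemma: betweentravels}), yet be short enough that $D-2(C+1)$ stays positive; both hold since $C$ and $D$ are constants (polynomials in the constant $N$) with $D \gg C$.
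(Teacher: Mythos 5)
Your proposal is correct and follows essentially the same strategy as the paper's proof: both establish that the entire behavior of the agents on the window $[u, u+8(N+1)^5]$ is a function of the tuple at time $u$, using Lemma~\ref{lemma: once} to anchor a fresh subschedule at $u$, the choice of $T$ to keep the third agent too far away to participate in any meeting or case-determination within the window, and Lemma~\ref{lemma: betweentravels} to guarantee $t' \leq u + 8(N+1)^5 < u'$ so that the target tuple is read off inside the window; your coupling of two executions is just a rephrasing of the paper's ``uniquely determines'' simulation. One justification is misstated: Lemma~\ref{lemma: noescape} does \emph{not} say that the isolated movement from the start of a subschedule enters a closed loop within $N$ steps and stays in an $\mathcal{O}(N)$-ball --- that is false whenever the trajectory has a nonzero travel vector (e.g., for the traveling agent). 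The conclusion you draw from it (Case~\ref{item: meet} cannot fire via $a_3$ inside the window) is nevertheless valid, because it already follows from your speed-of-movement bound $D - 2(C+1) > C+1$, so the error is harmless; I would simply replace that sentence with the distance argument.
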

\begin{proof}
	Let $T$ be sufficiently large so that $T \geq T_{8(N+1)^5 + 2N + 1}$ holds and Lemma \ref{lemma: once} and Lemma \ref{lemma: betweentravels} apply.
	Let $t$, $u$, $t'$, $u'$ be as described in the lemma. 
	Observe that the subschedule of the traveling agent for $(t,u)$ ends exactly at time $u$, by Lemma \ref{lemma: once}, and thus the subschedule of $\anext{u}$ actually starts at $u$.
	Moreover, due to the choice of $T$, the agent not contained in $M_u$ will not be in the same cell as another agent until at least (and including) time $u + 8(N+1)^5 + 2N$, while the two agents contained in $M_u$ are in the same cell at time $u$.
	Hence, if we knew of which types the subschedules of the three agents are until time $u + 8(N+1)^5$, then we could (deterministically) compute the exact behavior of the three agents up to time $u + 8(N+1)^5$.	

	Fortunately, the types of the subschedules are uniquely determined, in a simple way, by the tuple $(q_u^1, q_u^2, q_u^3, \anext{u}, M_u)$.
	Consider the first subschedule, i.e., the one of agent $\anext{u}$ starting at time $u$.
	Now consider the state of $\anext{u}$ whose second occurrence would come earliest if we scheduled $\anext{u}$ indefinitely (not counting the occurrence of a state at time $u$ and ignoring the existence of any other agents on the grid).
	If this second occurrence happens in the same cell as the previous occurrence of the same state and $\anext{u}$ did not encounter any other agent after time $u$ until the time of the second occurrence, then the subschedule of $\anext{u}$ must be of type \ref{item: repeat} according to the specification of our schedule.
	Otherwise, it must be of type \ref{item: meet}.	

	Note that $\anext{u}$ could only have encountered another agent until the second occurrence if these two agents are contained in $M_u$, due to our choice of $T$.
	More generally, any meeting during such a ``simulation'' (for determining if the respective subschedule is of type \ref{item: repeat}) must be between the two agents from $M_u$ if the agent whose potential subschedule is simulated starts its subschedule at time $u + 8(N+1)^5$ at the latest, again due to our choice of $T$.
	(Here, we use that such a simulation contains at most $2N$ time steps until the second occurrence.)	

	Now, the specification of the type of the subschedule in combination with the information contained in the tuple $(q_u^1, q_u^2, q_u^3, \anext{u}, M_u)$ uniquely determines the states of the agents at the time the subschedule of $\anext{u}$ ends, their relative locations compared to time $u$ and which agent is to move next.
	Then, we can iterate this argument for the second, third, $\dots$, subschedule (from time $u$ on) and obtain that for each of these subschedules the exact movement of the scheduled agent is uniquely determined by $(q_u^1, q_u^2, q_u^3, \anext{u}, M_u)$.
	Again, this argumentation holds up to (and including) time $u + 8(N+1)^5$.
	By Lemma \ref{lemma: betweentravels}, we know that $t' \leq u + 8(N+1)^5$.
	On the other hand, we have that $u' > u + 8(N+1)^5$, due to our choice of $T$.
	Hence, the agent scheduled at time $u + 8(N+1)^5$ must be the traveling agent for $(t', u')$.

	Moreover, the above considerations ensure that the exact behavior of the agents up to (and including) time $u + 8(N+1)^5$ is uniquely determined by $(q_u^1, q_u^2, q_u^3, \anext{u}, M_u)$.
	Therefore, also the traveling agent for $(t', u')$ is uniquely determined by $(q_u^1, q_u^2, q_u^3, \anext{u}, M_u)$, and also the parameter $t' - u$.
	Since $t' \leq u + 8(N+1)^5$, it follows that $(q_u^1, q_u^2, q_u^3, \anext{u}, M_u)$ uniquely determines the tuple $(q_{t'}^1, q_{t'}^2, q_{t'}^3, c_{t'}^1 - c_u^1, c_{t'}^2 - c_u^2, c_{t'}^3 - c_u^3, \anext{t'}, M_{t'})$.
\end{proof}

\section{The Travel Vector and a Modulo Operation}

After collecting the above helping lemmas, we are now all set to formally prove the (remaining) statements from our proof sketch.
Before going through the statements one by one, let us for convenience define the notion of a travel:
Let $(t,u)$ be a travel meeting pair.
By Lemma \ref{lemma: once}, we know that the traveling agent for $(t,u)$ is scheduled exactly once between $t$ and $u$.
We call the corresponding subschedule (or the movement during that subschedule) a \emph{travel}. 
Recall the definition of travel vector and travel period.
Note that a travel only has a travel vector (and period) if the traveling agent repeats a state (in empty cells) during the travel.
Furthermore, observe that if a travel has a travel vector, then at least one entry of the travel vector is non-zero, due to the choice of our schedule.
We now prove the first of the remaining statements, namely, that after a certain point in time, any travel vector has the same slope.

\begin{lemma}

	There is a point in time $T$ and a (possibly negative) ratio $r$ such that each travel starting at time $T$ or later has travel vector $(x,y)$ with $y/x = r$.
	For the sake of simplicity, assume that $r$ is set to $\infty$ if $x = 0$.

	\label{lemma: vector}

\end{lemma}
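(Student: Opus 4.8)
The plan is to prove Lemma~\ref{lemma: vector} by contradiction, exploiting Lemma~\ref{lemma: distance} (which forces the farthest two agents to stay at distance $\geq D$ for any chosen $D$ once $t \geq T_D$) together with the structural control on travels coming from Lemma~\ref{lemma: once}. First I would fix a large $D$ — concretely something like $D > 4N^2 + 10N$, or whatever polynomial threshold the later arithmetic demands — and restrict attention to times $t \geq T_D$. Assume for contradiction that travels of two different slopes occur at arbitrarily late times. Since there are only finitely many possible travel vectors (each travel vector is realized within one travel period of length at most $N$, so its entries are bounded in absolute value by $N$), there is a \emph{fixed pair} of slopes $r_1 \neq r_2$ each of which is attained by travels arbitrarily late; by a pigeonhole/infinitary argument I can even find two \emph{subsequent} travel meeting pairs after time $T_D$ whose travels have these two different slopes. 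The goal is to show that the second of these two travels cannot actually reach its destination agent, contradicting the definition of a travel meeting pair.

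The key geometric step is the ``forth and back'' argument sketched in Section~\ref{sec:antstech}. Consider a travel meeting pair $(t,u)$ with traveling agent $a_i$, source agent $a_s$ (which $a_i$ left at time $t$) and destination agent $a_d$ (which $a_i$ reaches at $u$). By Lemma~\ref{lemma: once}, between $t$ and $u$ the only agent scheduled for a substantial stretch is $a_i$, whose subschedule is a single type-1 travel; $a_s$ and $a_d$ each move at most once, and then only via a type-2 subschedule, hence by Lemma~\ref{lemma: type2} each moves by at most $N$. So $c_u^i$ — the meeting cell — is within distance $N$ of $c_t^d$, and $c_t^s$ is within distance $N$ of $c_u^s$. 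Now look at the \emph{next} travel, starting from some cell that is within $O(N)$ of $c_u^i \approx c_t^d$, and whose destination is the agent that was left behind, i.e.\ essentially $a_s$, sitting near $c_t^s$. The displacement between the start and endpoint of this second travel is, up to additive error $O(N)$, equal to $\pm$(the displacement of the first travel). If the first travel had slope $r_1$ and the second has slope $r_2 \neq r_1$, then the travel vector of the second travel, iterated over its (at most $N$) travel periods to cover a total displacement of length $\Theta(D)$, points in a direction differing from the line connecting its start to $a_s$. Since the second travel moves a total Manhattan distance $\geq D - O(N)$ and its per-period deviation from the correct direction is bounded below by a constant depending only on the finite set of possible slopes, the traveling agent ends up at distance $\Omega(D)$ from where $a_s$ actually is — so it misses $a_s$. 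But a travel, by definition of travel meeting pair, must end with the traveling agent sharing a cell with the destination agent; contradiction. Here I also need to account for the fact that a travel can sweep out area to the left and right of its travel direction (see Figure~\ref{fig:travelVector}): the point is that within a single travel period the lateral excursion is at most $N$, so over the whole travel the agent stays within a band of width $O(N)$ around the line of slope $r_2$ through its start, and for $D$ large enough this band misses $a_s$'s position entirely.

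I would organize the write-up as: (1) fix $D$ and $T_D$ and show there are only finitely many travel vectors, hence finitely many travel slopes, so if the claim fails there are two subsequent late travels of distinct slopes; (2) apply Lemma~\ref{lemma: once} to pin down, up to $O(N)$, the start cell, end cell, and destination-agent position of each of the two travels, and to establish that the second travel's displacement is the negative of the first's up to $O(N)$; (3) do the band/deviation estimate to conclude the second travel cannot reach its destination agent, contradicting that it is a travel; (4) conclude that all travels from some time $T$ on share a single slope $r$, with the $r = \infty$ convention handling the $x = 0$ case uniformly. The main obstacle I expect is step (3): making the geometry fully rigorous — in particular carefully handling the case where the travel vector's slope is close but not equal to $r_1$, the case of vertical/horizontal travel vectors, and the lateral-sweep subtlety — while keeping the additive $O(N)$ slack genuinely constant and independent of $D$, so that choosing $D$ large really does force the contradiction. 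A secondary subtlety is correctly identifying which agent plays the role of ``the destination of the second travel''; one must argue from Lemma~\ref{lemma: meeting} and the meeting-pair structure that consecutive travels chain up so that the agent left behind by one travel is (close to) the agent sought by the next.
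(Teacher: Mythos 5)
Your proposal is correct and follows essentially the same route as the paper's proof: a contradiction via two \emph{consecutive} late travels of distinct slopes, using Lemma~\ref{lemma: once} to pin down start/end cells and the source/destination agents up to an additive constant, and the finiteness of the set of possible travel vectors to choose $D$ uniformly; the paper phrases your ``misses the destination'' step equivalently as the bound $\distance{c_0}{c_0 + b \cdot (x,y) + b' \cdot (x',y')} \leq 16(N+1)^5 + 4N$ being violated once $b, b'$ are large, since two long vectors of different slopes cannot nearly cancel. The only quantitative correction is that the slack coming from the gap between the two travel meeting pairs is controlled by Lemma~\ref{lemma: betweentravels} and is of order $(N+1)^5$ rather than $O(N)$, but it is still a constant independent of $D$, so your argument goes through unchanged.
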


\begin{proof}

	Let $T$ be sufficiently large so that $T \geq T_{N+2}$ holds and Lemma \ref{lemma: once} and Lemma \ref{lemma: betweentravels} apply.
	Then we know that any travel starting at time $T$ or later actually has a travel vector (and period).
	Now, consider two travel meeting pairs $(t,u)$ and $(t', u')$ with $T \leq t < t'$ such that there is no travel meeting pair $(t'', u'')$ with $t < t'' < t'$.
	%Let $t \geq T$ where $T$ will be specified later.
	%For now, we just assume that $T$ is sufficiently large so that Lemma \ref{lemma: once} and Lemma \ref{lemma: betweentravels} apply.
	Let $(x,y), (x', y')$ be the travel vectors for the travels corresponding to $(t,u)$ and $(t', u')$, respectively.
	Assume that $y'/x' \neq y/x$, where, again, we set the ratio to $\infty$ if the denominator is $0$.
	Note that not both of $x$ and $y$ (or $x'$ and $y'$) can be $0$.
	Let $c_0$ and $c_1$ be the cells at which the travel with travel vector $(x,y)$ starts and ends, respectively, and $c'_0$ and $c'_1$ analogously for the travel with travel vector $(x',y')$.	

	By the characterization of the travel of a single agent and the fact that the travel period is always at most $N$, we know that there are positive integers $b$ and $b'$ such that $\distance{c_1}{c_0 + b \cdot (x,y)} \leq N$ and $\distance{c'_1}{c'_0 + b' \cdot (x', y')} \leq N$.
	Moreover, by Lemma \ref{lemma: type2} and Lemma \ref{lemma: once}, the source agent for $(t,u)$ travels at most a distance of $N$ between time $t$ and $u$ since its subschedule is of type 2 if the agent is scheduled at all.
	The same holds for the destination agent for $(t', u')$ between time $t'$ and $u'$.
	By Lemma \ref{lemma: betweentravels}, it follows that $\distance{c_0}{c'_1} \leq 8 (N + 1)^5 + 2 N$ (since the source agent for the first of the two travels is the destination agent for the second) and $\distance{c_1}{c'_0} \leq 8 (N + 1)^5$.
	Combining our above distance observations, we also obtain $\distance{c'_1}{c_0 + b \cdot (x,y) + b' \cdot (x',y')} \leq N + 8 (N + 1)^5 + N$, which together with $\distance{c_0}{c'_1} \leq 8 (N + 1)^5 + 2 N$ implies $\distance{c_0}{c_0 + b \cdot (x,y) + b' \cdot (x',y')} \leq 16 (N + 1)^5 + 4 N$.

	Let $D \geq N$ be some positive integer.
	We now require, additionally to the above requirements regarding $T$, that $T \geq T_D$.
	Also fix some arbitrary $x, y, x', y'$ such that $(x,y)$ and $(x', y')$ are possible travel vectors of a single agent.
	For a contradiction, assume that $x, y, x', y'$ have the properties specified at the beginning of the proof (which implies that also all of the above conclusions hold).	

	At the time when the first of the two considered travels starts there are two agents at $c_0$ and $c_1$ while the last agent is in distance at most $N$ from $c_0$.
	Hence, the distance between $c_0$ and $c_1$ is at least $D - N$.
	This implies that $b \cdot (|x| + |y|) \geq \distance{c_1}{c_0} - N \geq D - 2 N$.
	Analogously, we obtain $b' \cdot (|x'| + |y'|) \geq D - 2 N$.
	Since $x, y, x', y'$ are fixed, we can therefore make $b$ and $b'$ arbitrarily large by increasing $D$.
	By increasing $b$ and $b'$, we can in turn make $\distance{c_0}{c_0 + b \cdot (x,y) + b' \cdot (x',y')}$ arbitrarily large, since $y'/x' \neq y/x$ (which implies that there is an angle between the two vectors $(x,y)$ and $(x', y')$ that is not $0^{\circ}$ or $180^{\circ}$).
	Hence, if $D$ is sufficiently large, then the above inequality $\distance{c_0}{c_0 + b \cdot (x,y) + b' \cdot (x',y')} \leq 16 (N + 1)^5 + 4 N$ is not satisfied anymore, which shows that $y'/x' = y/x$.

	Note that the magnitude $D$ has to reach for this (in our proof by contradiction) depends on $x, y, x', y'$.
	However, since the number of possible travel vectors of a single agent is bounded by the number of states in its finite automaton, we can simply derive a sufficiently large $D$ for each of the finitely many possible combinations for $x, y, x', y'$ and then choose a $T$ that is larger than all of the $T_D$.	
\end{proof}

Note that the exact value of $r$ depends only on the finite automaton governing the behavior of the three agents.
From now on, we denote the ratio whose existence is certified by Lemma \ref{lemma: vector} by $r$.
W.l.o.g., we can (and will) assume that $r \geq 0$ (and that $r \neq \infty$), for reasons of symmetry.
Recall that any travel vector has at least one non-zero entry.
The next step on our agenda is essentially to show that the state of an agent at the end of a travel does not depend on (the full information about) the vector between start and endpoint of that travel (and other parameters), but only on a reduced amount of information regarding this vector (and the other parameters).
More specifically, the required information about this vector is the result of applying a certain modulo operation to the vector.

We then proceed by showing that the information about 1) the states of the agents, 2) their relative locations after applying the modulo operation, 3) which agents shared a cell most recently, and 4) which agent is scheduled next, at the start of a travel, is enough to determine the exact same information at the end of the travel.
Now, we benefit from the previous reduction of information due to our modulo operation in the sense that we can show that there are only constantly many combinations of relative locations of the three agents (that can actually occur) after applying the modulo operation.
This, in turn, implies that there are only constantly many possibilities for the whole aforementioned information tuple at the start and end of a travel, which will enable us to prove our main theorem.
We start by defining our modulo operation in Definition \ref{def: modulo}.
Then we show a technical helping lemma, Lemma \ref{lemma: tuple}, which finally enables us to prove the aforementioned relation between the information tuple at the start and end of a travel in Lemma \ref{lemma: modulo}.
Note that for technical reasons, Lemma \ref{lemma: modulo} gives a slightly different statement than indicated above, dealing with travel meeting pairs instead of travels.

\begin{definition}

	\label{def: modulo}

Let $\{(x_1, y_1), (x_2, y_2), \dots, (x_k, y_k) \}$ be the set of travel vectors that the agents can have if you let one of them explore the grid starting in an arbitrary state (which clearly is a superset of the actually occurring travel vectors in our multi-agent case).
	Let $R$ be the subset of the above set that contains exactly the vectors $(x_j, y_j)$ that satisfy $y_j/x_j = r$.
	From now on, denote by $x$ the least common multiple of the $|x_j|$ from the vectors in $R$ and set $y := rx$.
	It follows that $(x,y)$ is a (possibly negative) integer multiple of any of the vectors from $R$.
	Note that $R$ cannot be empty since otherwise it is not possible that the agents explore the entire grid, due to Lemma \ref{lemma: meeting} and Lemma \ref{lemma: vector}.	

	Now, let $w, z$ be integers and let $b$ be the smallest integer such that $w+bx \geq 0$.
	(This is well-defined since $x > 0$, due to $r \neq \infty$.)
	We define $(w, z) \pmod{(x, y)} := (w + bx, z + by)$.
	For two cells $(w', z')$, $(w'', z'')$, we define $(w'', z'') \ominus (w', z') := (w'' - w', z'' - z') \pmod{(x, y)}$.

\end{definition}

Note that Definition \ref{def: modulo} ensures that for any $(w,z)$, $(w',z')$ where $(w' - w, z' - z)$ is a multiple of $(x,y)$, we have that $(w, z) \pmod{(x, y)} = (w', z') \pmod{(x, y)}$.

\begin{lemma}

	Let $a$ be an agent, $q$ a state from $a$'s finite automaton and $c, c', c''$ cells of the grid such that the following properties are satisfied:

	\begin{enumerate}

		\item $\distance{c}{c'} \geq N$ and $\distance{c''}{c'} \geq N$

		\item \label{prop: integer} There is an integer $b$ such that $c'' - c = b \cdot (w,z)$, where $(w,z)$ is agent $a$'s travel vector if it starts in state $q$.

		\item If agent $a$ starts in cell $c$ in state $q$ on an otherwise empty grid, then it arrives at $c'$ after finite time.

		\item If agent $a$ starts in cell $c''$ in state $q$ on an otherwise empty grid, then it arrives at $c'$ after finite time.

	\end{enumerate}

	Let $q'$ denote the state in which $a$ arrives at $c'$ (for the first time) when starting from $c$ (in state $q$), and $q''$ the state in which $a$ arrives at $c'$ (for the first time) when starting from $c''$ (in state $q$).
	Then it holds that $q' = q''$.

	\label{lemma: tuple}

\end{lemma}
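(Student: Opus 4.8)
The plan is to reduce the two‑walk statement to a statement about a single walk, and then exploit the eventual periodicity of a finite automaton walking on the (translation‑invariant) grid.

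Since the grid is homogeneous, the walk of $a$ started at $c''=c+b(w,z)$ in state $q$ is exactly the walk started at $c$ in state $q$, translated by $b(w,z)$. Writing $\Gamma$ for the walk from $c$ in state $q$, the claim is therefore equivalent to: the state of $\Gamma$ at its first visit to $c'$ equals the state of $\Gamma$ at its first visit to $c'-b(w,z)$. We may assume $b>0$ (for $b=0$ the two walks coincide; for $b<0$ swap the roles of $c$ and $c''$). Now $\Gamma$ is eventually periodic: there is a time $t_1\le N-1$ from which $\Gamma$ repeats with period $p\le N$ and per‑period displacement $(w,z)$. Let $T=\{\Gamma_{t_1},\dots,\Gamma_{t_1+p-1}\}$ be the ``tile''; every cell of $T$ lies within distance $N-1$ of $c$. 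Since $\distance{c}{c'}\ge N$ and $\distance{c''}{c'}=\distance{c}{c'-b(w,z)}\ge N$ both exceed $t_1$, the first visit of $\Gamma$ to each of $c'$ and $c'-b(w,z)$ occurs after time $t_1$; hence, for any visited cell $e$ at distance $\ge N$ from $c$, the state at the first visit to $e$ equals $s_{t_1+\rho(e)}$, where $\rho(e)\in\{0,\dots,p-1\}$ is the phase of the lexicographically smallest pair $(k,\rho)$ with $e=\Gamma_{t_1+\rho}+k(w,z)$, $k\ge 0$; write $k_{\min}(e)$ for the corresponding minimal $k$. It remains to show $\rho(c')=\rho(c'-b(w,z))$.

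If $k_{\min}(c')\ge b$ this is immediate: subtracting $b(w,z)$ from the optimal representation of $c'$ yields a representation of $c'-b(w,z)$, and a minimality check shows it is itself optimal, so the two phases agree. For the general case I would prove a \emph{shift lemma}: whenever a cell $e+(w,z)$ is visited and does not lie in $T$, one has $k_{\min}(e)=k_{\min}(e+(w,z))-1$ and $\rho(e)=\rho(e+(w,z))$ (both follow by comparing representations of $e$ and $e+(w,z)$). The shift lemma says $\rho$ is constant along each line $\{e_0+k(w,z)\}$ restricted to the visited set, except that its value may jump where the line meets $T$. The cells $c',c'-(w,z),\dots,c'-b(w,z)$ are all visited (since $c'$ and $c'-b(w,z)$ are), so chaining the shift lemma up this line from $c'-b(w,z)$ to $c'$ yields $\rho(c')=\rho(c'-b(w,z))$ — provided none of $c',c'-(w,z),\dots,c'-(b-1)(w,z)$ lies in $T$.

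The crux, and the step I expect to be hardest, is to dispose of the case where the segment from $c'-b(w,z)$ to $c'$ meets a cell of $T$. The endpoints $c'$ and $c'-b(w,z)$ themselves cannot be in $T$, as they are at distance $\ge N$ from $c$ while $T$ lies within distance $N-1$ of $c$. If some $c'-j(w,z)$ with $1\le j\le b-1$ were in $T$, then — because $c'-b(w,z)$ is visited and sits $b-j$ units of $(w,z)$ ``below'' that cell of $T$ — the tile would have to fold back on itself by at least $b-j$ units of $(w,z)$, forcing $b-j\le p-1\le N-1$; a companion bound drawn from $\distance{c}{c'-b(w,z)}\ge N$ sharpens this. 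These constraints confine the offending configuration enough that one can verify $q'=q''$ for it by hand; alternatively, one argues more carefully that the (necessarily existing, by the translation argument) later visit of $\Gamma$ to $c'$ in state $q''$ must occur in the same phase as the first visit to $c'$. This bounded case analysis is where the real content of the lemma lies; everything else is bookkeeping about representations.
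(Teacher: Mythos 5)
Your setup is fine as far as it goes: the reduction by translation invariance, the eventual periodicity with tile $T$ inside distance $N-1$ of $c$, the representation bookkeeping, the shift lemma, and the easy case $k_{\min}(c')\ge b$ are all correct. But the proof has a genuine gap exactly where you say "the real content of the lemma lies": the case where some $c'-j(w,z)$ with $1\le j\le b-1$ lies in $T$. There you offer only that the configuration is "confined enough that one can verify $q'=q''$ by hand," or that "one argues more carefully" about the phase of a later visit. Neither is an argument. The bound $b-j\le p-1$ you extract from the tile folding back on itself does not, by itself, control anything: the phase function $\rho$ genuinely can jump where the line $\{c'-j(w,z)\}$ meets $T$, and you give no mechanism forcing $\rho(c')=\rho(c'-b(w,z))$ across such a jump. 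Since this is the only case in which the two first visits could conceivably disagree, the lemma is not proved.

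For contrast, the paper sidesteps the phase bookkeeping entirely and resolves precisely this hard case by showing it is \emph{incompatible with the hypotheses} rather than by a local analysis near $T$. It asks whether the walk started at $c$ in state $q$ visits $c'$ before it reaches $c''$ in state $q$. If not, translation invariance finishes (your observation). If yes, it normalizes $w\ge z\ge 0$ w.l.o.g., lets $c_0=c,c_1,\dots,c_k=c''$ be the successive occurrences of state $q$ (so $c_{j+1}=c_j+(w,z)$), and notes that an early visit to $c'$ forces $\distance{c_h}{c'}\le N-1$ for some $h<k$; together with $\distance{c''}{c'}\ge N$ this forces the $x$-coordinate of $c'$ to be strictly smaller than that of $c''$, whence $\distance{c_j}{c'}\ge N$ for all $j\ge k$ and $c'$ is never visited at or after the arrival at $c''$ --- contradicting your hypothesis that the walk from $c''$ reaches $c'$. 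If you want to keep your route, this contradiction argument (or an equivalent one) is the missing ingredient for your crux case; without it the proof is incomplete.
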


\begin{proof}

	If $c = c''$, then the lemma holds trivially, thus assume that $c \neq c''$.
	W.l.o.g., we can assume that $b > 0$, which implies that, if agent $a$ starts in cell $c$ in state $q$ (say, at time $t$), then $a$ arrives at some point in time $u > t$ in cell $c''$ in state $q$ (possibly $a$ visited $c''$ before in some other state).
	Hence, if $a$ does not visit cell $c'$ between time $t$ and time $u$, then the lemma also holds since after arriving at $c''$ in state $q$, $a$ will perform the exact same movement as if it started in $c''$ in state $q$.

	Thus, consider the last remaining case, i.e., assume that $a$ visits $c'$ for the first time at some time $t < t' < u$.
	W.l.o.g., we can assume that $w$ and $z$ are non-negative and $w \geq z$.
	(Also recall that at least one of $w$ and $z$ is non-zero.)
	Let $c_0, c_1, \dots$ be the cells that $a$ visits in state $q$ at and after time $t$, where $c_0$ and $c_k$, for some $k > 0$, are the cells that $a$ visits at time $t$ and $u$, respectively, i.e., $c_0 = c$ and $c_k = c''$.
	Observe that $c_{j + 1} = c_j + (w,z)$ holds for each $j$.
	Denote the $x$-coordinates of $c'$ and $c_k = c''$ by $x'$ and $x''$, respectively.
	Since $w \geq z$, it follows that $\distance{c_j}{c'} \geq \distance{c''}{c'} \geq N$ for all $j \geq k$ if $x' \leq x''$, and $\distance{c_j}{c'} \geq \distance{c''}{c'} \geq N$ for all $0 \leq j \leq k$ if $x' \geq x''$.
	Let $h$ be the largest index such that $a$ visits $c_h$ in state $q$ at or before time $t'$.
	Then $h < k$, and $\distance{c_h}{c'} \leq N - 1$ since traveling from $c_h$ (in state $q$) to $c_{h+1}$ (in state $q$) takes $a$ at most one travel period, so at most $N$ time steps.
	If $x' \geq x''$, then we obtain a contradiction to our above observation, thus it follows that $x' < x''$.
	But this implies $\distance{c_j}{c'} \geq N$ for all $j \geq k$ which in turn implies for all $j \geq k$ that $c'$ cannot be visited by $a$ between visiting $c_j$ (in state $q$) and $c_{j+1}$ (in state $q$).
	Hence, $a$ does not visit $c'$ at or after time $u$.
	Since $a$ performs the exact same movement from time $u$ onwards as if it would have initially started in $c''$ in state $q$, it follows that agent $a$ starting in $c''$ in state $q$ never visits $c'$, which is a contradiction to our assumptions.
	Thus, this last remaining case cannot occur, which completes the proof.
\end{proof}

\begin{lemma}
	Let $(t, u)$ be a travel meeting pair.
	Consider the tuple $Q_t := (q_t^1, q_t^2, q_t^3, c_t^1 \ominus c_t^2, c_t^1 \ominus c_t^3, c_t^2 \ominus c_t^3, \anext{t}, M_t)$, where $\anext{t}$ again denotes the agent that is scheduled at time $t$.
	There is a point in time $T$ such that the following holds:
	If $t \geq T$, then $Q_t$ uniquely determines the tuple $Q_u = (q_u^1, q_u^2, q_u^3, c_u^1 \ominus c_u^2, c_u^1 \ominus c_u^3, c_u^2 \ominus c_u^3, \anext{u}, M_u)$.

	\label{lemma: modulo}
\end{lemma}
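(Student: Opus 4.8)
The plan is to choose $T$ large enough that Lemma~\ref{lemma: once} and Lemma~\ref{lemma: vector} apply and, in addition, $T \geq T_D$ for a suitable constant $D$ (e.g.\ $D = 3N$), and then to reconstruct $Q_u$ piece by piece from $Q_t$. Fix a travel meeting pair $(t,u)$ with $t \geq T$, and let $a_i$ be the traveling agent, $a_j$ the source agent, $a_k$ the destination agent. By Lemma~\ref{lemma: once}, between $t$ and $u$ the agents $a_j$ and $a_k$ are scheduled at most once each, for subschedules of type~2, and $a_i$ is scheduled exactly once, for its travel, which is of type~1 and ends exactly at $u$; hence the travel is the last subschedule. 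From $Q_t$ one can already identify who is who: $M_t = \{a_i, a_j\}$, and the traveling agent among these two is the one whose state, run alone on an empty grid, has a (nonzero) travel vector, while the source's state must return to a repeated state-in-cell, as otherwise its type-2 subschedule would be governed by Case~\ref{item: escape}, contradicting Lemma~\ref{lemma: noescape}. The destination $a_k$ is the agent not in $M_t$, so $M_u = \{a_i, a_k\}$ is determined, and since $a_i$'s travel ends exactly at time $u$, the agent $\anext{u}$ is the one following $a_i$ in the round-robin order; both are determined by $Q_t$.

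Next I would handle the phase $[t,t']$ preceding the travel, where $t'$ is the start of $a_i$'s travel. By Lemma~\ref{lemma: type2} the (at most two) type-2 subschedules of $a_j$ and $a_k$ last at most $N$ steps each, so $t' \leq t + 2N$. Since $D$ is large, $a_k$ stays at distance $\geq D - 2N \geq 1$ from everyone throughout $[t,t']$, and $a_j$'s subschedule being of type~2 forces it to leave $c^* := c_t^i$ immediately and not return to a shared cell; hence no agent meets another between $t$ and $t'$, so each of these subschedules coincides with the corresponding lone-agent movement. Consequently the displacements $\delta_j := c_{t'}^j - c^*$ and $\delta_k := c_{t'}^k - c_t^k$ are determined by $q_t^j$ and $q_t^k$ respectively, $a_i$ is unmoved and starts its travel at $c^*$ in state $q_t^i$, and $\anext{t'} = a_i$ — all determined by $Q_t$.

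The crux is the travel itself, spanning $[t',u]$. Since $t' \geq T$, Lemma~\ref{lemma: vector} gives that it has a travel vector $(w,z)$ with $z/w = r$, so $(w,z)$ lies in the set $R$ of Definition~\ref{def: modulo} and $(x,y)$ is an integer multiple of $(w,z)$. During $[t',u]$ only $a_i$ moves, and it is alone in its cell until time $u$, where it reaches the destination at $c_u^k = c_t^k + \delta_k$; thus $a_i$ behaves exactly like a single agent on an empty grid starting from $c^*$ in state $q_t^i$ and arriving at $c_u^k$. I claim $q_u^i$ depends only on $(c_u^k - c_t^i) \bmod (x,y)$, which in turn is determined by $Q_t$ (it equals $((c_t^k - c_t^i) + \delta_k) \bmod (x,y)$, and $(c_t^k - c_t^i) \bmod (x,y)$ is determined by the entries $c_t^a \ominus c_t^b$ of $Q_t$): given two runs with the same $Q_t$, translate so that the two destination cells $\gamma$ coincide; then the two starting cells $\sigma_1, \sigma_2$ of $a_i$'s travels satisfy $\gamma - \sigma_1 \equiv \gamma - \sigma_2 \pmod{(x,y)}$, so $\sigma_1 - \sigma_2$ is an integer multiple of $(w,z)$, and applying Lemma~\ref{lemma: tuple} with agent $a_i$, state $q_t^i$, $c = \sigma_1$, $c'' = \sigma_2$, $c' = \gamma$ — whose distance hypotheses hold because $\distance{\sigma_1}{\gamma}, \distance{\sigma_2}{\gamma} \geq D - N \geq N$, and whose reachability hypotheses are precisely the two empty-grid travels — yields the same arrival state, i.e.\ equal $q_u^i$. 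Finally, $q_u^j$ and $q_u^k$ are determined since $a_j, a_k$ sit in a state obtained from $q_t^j$, resp.\ $q_t^k$, by a type-2 subschedule (or are unchanged), and the reduced relative positions at time $u$ follow by modular arithmetic: $c_u^i \ominus c_u^k = (0,0)$, and $c_u^j \ominus c_u^i = \big(\delta_j - ((c_t^k - c_t^i) + \delta_k)\big) \bmod (x,y)$ is well-defined and unchanged under adding multiples of $(x,y)$, by the remark after Definition~\ref{def: modulo}; the third pair is analogous. This assembles $Q_u$ from $Q_t$.

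The \emph{main obstacle} is the claim about the travel's end state: a priori $q_u^i$ could depend on the full (unbounded) displacement from the start of the travel to the destination, whereas we need it to depend only on that displacement modulo $(x,y)$. This is exactly what Lemma~\ref{lemma: tuple} provides, and the two delicate points are (i) checking its distance hypotheses, which is why we need $T \geq T_D$ for a constant $D = \Theta(N)$, and (ii) arranging by a translation that the two runs being compared become an instance of ``same target cell, start cells differing by a multiple of the travel vector'' to which Lemma~\ref{lemma: tuple} directly applies. As usual, some minor extra care is needed for degenerate schedules near time $t$ (a missing or length-one subschedule of a non-traveling agent, or $\anext{t}$ being the traveling agent), handled exactly as in the proofs of Lemma~\ref{lemma: once} and Lemma~\ref{lemma: traveltravel}.
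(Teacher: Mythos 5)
Your proposal is correct and follows essentially the same route as the paper: determine the identities and subschedule types from $Q_t$ by simulation, reduce everything to the traveling agent's arrival state, and apply Lemma~\ref{lemma: tuple} (after translating so the destination cells coincide, with the distance hypotheses supplied by $T \geq T_{\Theta(N)}$ and the multiple-of-travel-vector hypothesis supplied by Lemma~\ref{lemma: vector} and Definition~\ref{def: modulo}). The one point you defer to ``degenerate cases'' is handled explicitly in the paper: the traveling agent may still share its cell with the source agent when its travel begins, so Lemma~\ref{lemma: tuple} is applied from the configuration after the travel's first step, which is itself determined by $Q_t$.
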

\begin{proof}
	Let $T$ be sufficiently large so that $T \geq T_{3N+1}$ holds and Lemma \ref{lemma: once} and Lemma \ref{lemma: vector} apply.
	Let $(t, u)$ be a travel meeting pair with $t \geq T$.
	We start by observing that the subschedule of $\anext{t}$ \emph{starts} at time $t$.
	The reason for this is that if $\anext{t} \in M_t$, then $\anext{t}$ cannot have been scheduled at time $t-1$ as otherwise its subschedule would not continue at time $t$ due to the specification of our schedule, and if $\anext{t} \notin M_t$, then $(t, u)$ would not be a travel meeting pair.	

	Combining the argumentation from the proof of Lemma \ref{lemma: traveltravel} (about the uniquely determined subschedules) with Lemma \ref{lemma: type2} and Lemma \ref{lemma: once}, we see that $Q_t$ uniquely determines which agent is the traveling agent for $(t, u)$, which in turn uniquely determines $M_u$.
	By Lemma \ref{lemma: once}, the last agent that is scheduled before time $u$ is exactly the traveling agent, which uniquely determines $\anext{u}$.	

	Moreover, the information which agent is the traveling agent together with the information which agent is scheduled at time $t$ (i.e., $\anext{t}$) uniquely determines which agents are scheduled between time $t$ and time $u$ (and all of them are scheduled only once, possibly for multiple subsequent time steps, if they are scheduled at all).
	Since no two agents meet between time $t$ and time $u$, it follows that the states of the source agent and the destination agent at time $u$ are uniquely determined by the states of the three agents at time $t$ (and the information which two agents are contained in $M_t$).
	Here we use that the subschedules of those agents (if they are scheduled at all) are of type 2, according to Lemma \ref{lemma: once}.	

	Similarly, the exact vectors (possibly the vector $(0,0)$) by which the source and the destination agent move are uniquely determined by $Q_t$.
	By Lemma \ref{lemma: once}, the subschedule of the traveling agent is of type 1 and ends in the cell that is occupied by the agent not contained in $M_t$.
	Since addition and our modulo operation behave nicely (more specifically, because $\left ((w,z) + (w',z') \right) \pmod{(x,y)} = \left ((w,z) \pmod{(x,y)} + (w',z') \right) \pmod{(x,y)}$ for all integers $w,z, w', z'$), we get that $c_u^1 \ominus c_u^2$, $c_u^1 \ominus c_u^3$, and $c_u^2 \ominus c_u^3$ are uniquely determined by the vectors by which source and destination agent move (combined with the information contained in $Q_t$), and thus also by $Q_t$.

	It remains to show that the state of the traveling agent at time $u$ is uniquely determined by $Q_t$.
	Denote the traveling agent by $a_i$.
	Since $t \geq T_{2N+1}$, the distance between $a_i$ and the destination agent at time $t$ is at least $2N+1$.
	When the subschedule of $a_i$ starts at some time $t \leq t' \leq u$, the destination agent may have moved from its location at time $t$, but since the subschedule of the destination agent is of type 2 (as observed above), it has moved a distance of at most $N$, by Lemma \ref{lemma: type2}.
	Hence, at time $t'$ the distance between $a_i$ and the destination agent is at least $N+1$.	

	Now, let $c, c''$ be cells with a distance of at least $N+1$ to the location $c'$ of the destination agent at time $t'$ and assume that $c' \ominus c = c' \ominus c''$.
	Then, according to the definition of our modulo operation, $c'' - c$ is a (possibly negative) integer multiple of $(x,y)$, and thus also of the travel vector of $a_i$, by Lemma \ref{lemma: vector}.
	Thus, by Lemma \ref{lemma: tuple}, it follows from the above that $q_u^i$ is uniquely determined by $Q_t$.
	Note that although $a_i$ may not be alone in its cell at the time its subschedule starts, we can still apply Lemma \ref{lemma: tuple} since after the first step of $a_i$ it is alone in its cell while all other requirements for Lemma \ref{lemma: tuple} are still satisfied.
	This completes the proof.
\end{proof}

\section{Three Semi-Synchronous Agents Do Not Suffice}

We now conclude our lower bound proof with Theorem \ref{thm:threelower}.
Roughly speaking, Lemma \ref{lemma: modulo} certifies that the behavior of the agents between any two subsequent occurrences of the same fixed information tuple $Q_t$ is reasonably similar.
Since there are only finitely many different $Q_t$ that actually occur, it follows that the behavior of the agents loops, in a very informal sense.
From this, we can derive a contradiction to the assumption that all cells are explored.

\begin{repeattheorem}{thm:threelower}
	Three semi-synchronous agents controlled by a finite automaton are not sufficient to explore the infinite grid.	
\end{repeattheorem}
\begin{proof}
	Suppose for a contradiction that three agents suffice to explore the grid.
	From the definition of a travel meeting pair and Lemma \ref{lemma: meeting}, it follows that there are points in time $t_1 < u_1 \leq t_2 < u_2 \leq t_3 < \dots$ such that $(t_j, u_j)$ is a travel meeting pair for any $j \geq 1$ and for every travel meeting pair $(t', u')$ there is a $j \geq 1$ with $t' = t_j$ and $u' = u_j$.

	Recall the definition of $Q_t$ in Lemma \ref{lemma: modulo}.
	Let $T$ be sufficiently large so that $T \geq T_1$ holds (where $T_1$ is just $T_D$ for $D = 1$) and Lemmas \ref{lemma: once}, \ref{lemma: betweentravels}, \ref{lemma: traveltravel}, \ref{lemma: vector} and \ref{lemma: modulo} apply, and let $k$ be an index such that $t_k \geq T$ and there is a $h > k$ with $h - k$ even and $Q_{t_k} = Q_{t_h}$.
	Such a $k$ must exist since there is only a finite number of tuples of the general form $Q_t$ (after time $T$) and the number of travel meeting pairs is infinite, by Lemma \ref{lemma: meeting}.
	Note that the finiteness of the number of tuples, in particular the finiteness of the (combinations of the) relative locations of the agents modulo $(x,y)$, relies on the fact that the possible travel vectors after time $T$ are restricted by Lemma \ref{lemma: vector}, together with the fact that in the time span given by a travel meeting pair source and destination agent are scheduled for at most $N$ steps, by Lemma \ref{lemma: type2} and Lemma \ref{lemma: once}.

	Consider the sequence $((t_k, u_k), (t_{k+1}, u_{k+1}), \dots, (t_h, u_h))$ of travel meeting pairs, where $h$ is the smallest index such that $h > k$ holds, $h - k$ is even, and $Q_{t_k} = Q_{t_h}$.
	We examine the cells that are explored by the source agent for $(t_k, u_k)$ between time $t_k$ and $t_{k + 1}$ and by the destination agent for $(t_{k+1}, u_{k+1})$ (which is the same as the aforementioned source agent) between time $t_{k + 1}$ and $t_{k + 2}$.
	Then we iterate this examination, in each iteration increasing the indices by $2$, and stop at time $t_h$.
	We say that the cells explored in the described way are explored during \emph{even} explorations.	

	In the first iteration, we obtain the following picture, where we denote the source agent for $(t_k, u_k)$ (i.e., the destination agent for $(t_{k+1}, u_{k+1})$) by $a$:
	The exact vector by which $a$ moves between time $t_k$ and $u_k$ is uniquely determined by $Q_{t_k}$, as observed in the proof of Lemma \ref{lemma: modulo}.
	The exact vector by which $a$ moves between time $u_k$ and $t_{k+1}$ is uniquely determined by $Q_{u_k}$, by Lemma \ref{lemma: traveltravel}.
	Similarly, the exact vectors by which $a$ moves between time $t_{k+1}$ and $u_{k+1}$ and between time $u_{k+1}$ and $t_{k+2}$ are uniquely determined by $Q_{t_{k+1}}$ and $Q_{u_{k+1}}$, respectively.

	Moreover, by combining Lemma \ref{lemma: traveltravel} and Lemma \ref{lemma: modulo}, we see that $Q_{u_k}$, $Q_{t_{k+1}}$, $Q_{u_{k+1}}$, and $Q_{t_{k+2}}$ are all uniquely determined by $Q_{t_k}$.
	Thus, the exact vector by which $a$ moves between time $t_k$ and time $t_{k+2}$ is uniquely determined by $Q_{t_k}$.
	Furthermore, by Lemma \ref{lemma: type2}, Lemma \ref{lemma: once}, and Lemma \ref{lemma: betweentravels}, the number of cells $a$ visits between time $t_k$ and time $t_{k+2}$ is bounded by a constant.
	Note that each $Q_{t_j}$ also uniquely determines which agent is the traveling agent (and hence which agent is the source/destination agent) for $(t_j, u_j)$, as observed in the proof of Lemma \ref{lemma: modulo}.

	For the second, third, $\dots$, iteration we obtain an analogous picture.
	Hence, the tuples $Q_{t_{k + 2}}, Q_{t_{k + 4}}, \dots$ are all uniquely determined by $Q_{t_k}$, and the locations of the respective source agents at times $t_{k + 2}, t_{k + 4}, \dots$ are all uniquely determined by $Q_{t_k}$ and the location of the source agent for $(t_k, u_k)$ at time $t_k$.

	%Thus, the cells explored during an even exploration between time $t_k$ and $t_h$ are uniquely determined by $Q_{t_k}$ and the location of the source agent for $(t_k, u_k)$ at time $t_k$.

	We obtain the following bigger picture:
	The location of the source agent for $(t_k, u_k)$ at time $t_k$ together with $Q_{t_k}$ uniquely determines both $Q_{t_h}$ and the location of the source agent for $(t_h, u_h)$ at time $t_h$, which, in turn, uniquely determine $Q_{t_{h+(h-k)}}$ and the location of the source agent for $(t_{h+(h-k)}, u_{h+(h-k)})$ at time $t_{h+(h-k)}$, and so on.
	Hence, there is a vector $(w,z)$ such that the locations of the respective source agents at times $t_k, t_h, t_{h+(h-k)}, t_{h+2(h-k)}, \dots$ are $c, c + (w,z), c + 2(w,z), \dots$, where $c$ denotes the cell occupied by the respective source agent at time $t_k$.
	Moreover, since the number of cells explored during an even exploration between time $t_k$ and $t_h$ (and similarly between time $t_{h+j(h-k)}$ and $t_{h+(j+1)(h-k)}$ for each $j \geq 0$) is bounded by a constant (which follows from a similar observation above), we obtain that there is a constant $L$ such that each cell explored during an even exploration has a distance of at most $L$ to some cell of the form $c + j' \cdot (w,z)$, where $j'$ is some non-negative integer.

	Moreover, by Lemmas \ref{lemma: type2}, \ref{lemma: once}, \ref{lemma: betweentravels}, \ref{lemma: vector}, and the definition of even explorations, we know that each explored cell is close to the travel of a traveling agent, i.e., there is a constant $L'$ such that each cell explored at or after time $t_k$ has a distance of at most $L'$ to some cell of the form $c' + j'' \cdot (x,y)$, where $j''$ is some integer and $c'$ a cell explored during an even exploration. 
	Combining our observations and adding the fact that only a constant number of cells are explored up to time $t_k$, it follows that there is a constant $L''$ such that each cell explored by the agents has a distance of at most $L''$ to some cell of the form $c + j' \cdot (w,z) + j'' \cdot (x,y)$, where $j', j''$ are integers and $j'$ is non-negative.
	Hence, we can draw a line in the grid such that all explored cells are to one side of the line, yielding a contradiction to the assumption that three agents suffice to explore the grid.	
\end{proof}

\section{Conclusion}
In this paper, we considered the collaborative grid exploration problem with agents controlled by asynchronous finite automata.
While this paper shows a tight bound for the minimum number of agents required to explore the grid, it remains an intriguing open question to generalize these results, especially finding non-trivial lower bounds, to more complex structures. 
From previous work, we know that there are graphs where no finite set of agents suffices~\cite{Rollik1979}.
However, several natural graph classes fall between grids and general graphs, such as planar graphs and penta/hexagrids.
It would be interesting to obtain a better understanding of the dependence of the required number of agents on the choice of the underlying (intermediate) graph structure.

\bibliography{references}

\end{document}